\newtheorem{thm}{Theorem}
\newtheorem{coro}{Corollary}
\newtheorem{lem}{Lemma}
\newtheorem{assumption}{Assumption}
\numberwithin{equation}{section}
\theoremstyle{definition}
\newtheorem{remark_tmp}{Remark}[section]
\newenvironment{remark}
	{ \begin{remark_tmp} 	}
	{ 
		%\renewcommand{\qedsymbol}{$\bm{\ell}acksquare$}
		%\medskip\qed 
		\medskip\hfill{\LARGE$\lrcorner$}
		\end{remark_tmp} 
	}
\renewcommand{\P}{\mathbb{P}}
\newcommand{\E}{\mathbb{E}}
\newcommand{\V}{\mathbb{V}}
\newcommand{\R}{\mathbb{R}}
\newcommand{\N}{\mathbb{N}}
\newcommand{\Z}{\mathbb{Z}}
\newcommand{\1}{\mathbbm{1}}
\def\m{\mathcal}
\def\ls{\lesssim}
\def\d{\mathrm{d}}
\newcommand{\Center}{\theta}
\newcommand{\Scale}{\vartheta}
\newcommand{\Lincom}{\mathsf{v}}
\renewcommand{\iota}{\mathsf{i}}
\newcommand{\AL}{\mathtt{AL}}
\newcommand{\SB}{\mathtt{SB}}
\begin{document}

% % % % % % % % % % % % % % Title Page % % % % % % % % % % % % % %
\title{\vspace{-0.0in} Higher-Order Refinements of Small Bandwidth Asymptotics for Density-Weighted Average Derivative Estimators\thanks{Prepared for the Conference in Honor of James L. Powell at UC Berkeley, March 25--26, 2022. We thank the conference participants for their comments, as well as two reviewers, an Associate Editor, and co-Editor Elie Tamer for their feedback. Cattaneo gratefully acknowledges financial support from the National Science Foundation through grants SES-1947805 and DMS-2210561, Jansson gratefully acknowledges financial support from the National Science Foundation through grant SES-1947662, and Masini gratefully acknowledges financial support from the National Science Foundation through grant DMS-2210561.}
\bigskip }
\author{Matias D. Cattaneo\thanks{Department of Operations Research and Financial Engineering, Princeton University.} \and
	    Max H. Farrell\thanks{Department of Economics, UC Santa Barbara.} \and
	    Michael Jansson\thanks{Department of Economics, UC Berkeley.} \and
	    Ricardo P. Masini\thanks{Department of Statistics, UC Davis.}}
\maketitle

\begin{abstract}
    The density weighted average derivative (DWAD) of a regression function is a canonical parameter of interest in economics. Classical first-order large sample distribution theory for kernel-based DWAD estimators relies on tuning parameter restrictions and model assumptions that imply an asymptotic linear representation of the point estimator. These conditions can be restrictive, and the resulting distributional approximation may not be representative of the actual sampling distribution of the statistic of interest. In particular, the approximation is not robust to bandwidth choice. Small bandwidth asymptotics offers an alternative, more general distributional approximation for kernel-based DWAD estimators that allows for, but does not require, asymptotic linearity. The resulting inference procedures based on small bandwidth asymptotics were found to exhibit superior finite sample performance in simulations, but no formal theory justifying that empirical success is available in the literature. Employing Edgeworth expansions, this paper shows that small bandwidth asymptotic approximations lead to inference procedures with higher-order distributional properties that are demonstrably superior to those of procedures based on asymptotic linear approximations.
\end{abstract}

\textit{Keywords:} density weighted average derivatives, Edgeworth expansions, small bandwidth asymptotics.
\thispagestyle{empty}
\clearpage

\setcounter{page}{1}
\pagestyle{plain}

\section{Introduction}

Identification, estimation, and inference in the context of semiparametric models has a long tradition in econometrics \citep{Powell_1994_Handbook}. Canonical two-step semiparametric estimands are finite dimensional functionals of some other unknown infinite dimensional parameters in the model (e.g., a density or regression function). A leading example of such a finite dimensional estimand is the density weighted average derivative (DWAD) of a regression function. This paper seeks to honor the many contributions of Jim Powell to semiparametric theory in econometrics by juxtaposing the higher-order distributional properties of \citeauthor{Powell-Stock-Stoker_1989_ECMA}'s \citeyearpar{Powell-Stock-Stoker_1989_ECMA} two-step kernel-based DWAD estimator under two alternative large sample approximation regimes: one based on the classical asymptotic linear representation, and the other based on a more general quadratic distributional approximation known as \textit{small bandwidth} asymptotics.\footnote{Jim Powell's contributions to semiparametric theory are numerous. \citet{Honore-Powell_1994_JOE}, \citet{Powell-Stoker_1996_JoE}, \citet{Blundell-Powell_2004_RESTUD}, \citet{AradillasLopez-Honore-Powell_2007_IER}, \citet{Ahn-Ichimura-Powell-Ruud_2018_JBES}, and \citet{Graham-Niu-Powell_2023_JOE} are some of the most closely connected to the our work: these papers employ U-statistics methods for two-step kernel-based estimators similar to those considered herein. See \citet{Powell_2017_JEP} for more discussion and references.}

In a landmark contribution, \citet{Powell-Stock-Stoker_1989_ECMA} proposed a kernel-based DWAD estimator and obtained first-order, asymptotically linear distribution theory employing ideas from the U-statistics literature in statistics to develop valid inference procedures in large samples. This work sparked a wealth of subsequent developments in the econometrics literature: \citet{Robinson_1995_ECMA} obtained Berry-Esseen bounds, \citet{Powell-Stoker_1996_JoE} considered mean square error expansions, \citet{Nishiyama-Robinson_2000_ECMA,Nishiyama-Robinson_2001_ChBook,Nishiyama-Robinson_2005_ECMA} developed Edgeworth expansions, and \citet{Newey-Hsieh-Robins_2004_Ecma} investigated bias properties, just to mention a few contributions. The two-step semiparametric estimator in this literature employs a preliminary kernel-based estimator of a density function, which requires choosing two main tuning parameters: a bandwidth and a kernel function. The ``optimal'' choices for these tuning parameters depend on the goal of interest (e.g., point estimation vs. inference), as well as on the features of the underlying data generating process (e.g., smoothness of the unknown density and dimension of the covariates).  

Classical first-order distribution theory for kernel-based DWAD estimators has focused on cases where tuning parameter restrictions and model assumptions imply an asymptotic linear representation of the two-step semiparametric point estimator \citep[see][for overviews]{Bickel-Klaassen-Ritov-Wellner_1993_Book,Newey-McFadden_1994_Handbook,Ichimura-Todd_2007_Handbook}. That is, the two-step estimator is approximated by a sample average based on an influence function. This approach can be used to construct semiparametrically efficient inference procedures, but requires potentially high smoothness levels of the underlying unknown functions, thereby forcing the use of higher-order kernels or other debiasing techniques. Further, the implied distributional approximation may not be ``robust'' to tuning parameter choices and/or model features. More specifically, the limiting distribution emerging from the asymptotic linear representation of the centered and scaled point estimator is invariant to the way that the preliminary nonparametric estimators are constructed. At its core, an asymptotic linear approximation assumes away the contribution of additional terms forming the statistic of interest, despite the fact that these terms do contribute to the sampling variability of the two-step semiparametric estimator and, more importantly, do reflect the impact of tuning parameter choices in finite samples.

\citet{Cattaneo-Crump-Jansson_2014a_ET} proposed an alternative distributional approximation for kernel-based DWAD estimators that allows for, but does not require, asymptotic linearity. The idea is to capture the joint contribution to the sampling distribution of both linear and quadratic terms forming the kernel-based DWAD estimator, because the quadratic term explicitly captures the effect of the choice of bandwidth and kernel function. To operationalize this idea, \citet{Cattaneo-Crump-Jansson_2014a_ET} introduced an asymptotic experiment where the bandwidth sequence is allowed (but not required) to vanish at a speed that would render the classical asymptotic linear representation invalid because the quadratic term becomes first order even in large samples, which they termed \textit{small bandwidth} asymptotics. This framework was carefully developed to obtain a distributional approximation that explicitly depends on both linear and quadratic terms, thereby forcing a more careful analysis of how the nonparametric first stage contributes to the sampling distribution of the statistic.

Inference methods based on small bandwidth asymptotics for kernel-based DWAD estimators were found to perform well in simulations \citep{Cattaneo-Crump-Jansson_2010_JASA,Cattaneo-Crump-Jansson_2014a_ET,Cattaneo-Crump-Jansson_2014b_ET}, but no formal justification for this finite sample success is available in the literature. Methodologically, this alternative distributional approximation leads to a new way of conducting inference (e.g., constructing confidence interval estimators) because the original standard error formula proposed by \citet{Powell-Stock-Stoker_1989_ECMA} must be modified to make the asymptotic approximation valid across the full range of allowable bandwidths (including the region where asymptotic linearity fails). Theoretically, however, the empirical success of small bandwidth asymptotics could come from two distinct sources: (i) it could deliver a better distributional approximation to the sampling distribution of the point estimator; or (ii) it could deliver a better distributional approximation to the sampling distribution of the studentized t-statistic because the standard error formula is modified.

Employing Edgeworth expansions \citep{Bhattacharya-Rao_1976_Book,Hall_1992_Book}, this paper shows that the higher-order distributional properties of inference procedures motivated by the small bandwidth asymptotics approximation framework are demonstrably superior to those of procedures motivated by asymptotic linear approximations. We study both standardized and studentized estimators and show that those emerging from the small bandwidth regime offer higher-order corrections, as measured by the second cumulant underlying their Edgeworth expansions. An immediate implication of our results is that the small bandwidth asymptotic framework simultaneously enjoys two advantages: delivering a better distributional approximation (Theorem \ref{thm:EE-standard}, standardized t-statistic) and leading to a better standard error construction (Theorem \ref{thm:EE-student}, studentized t-statistic). Therefore, our results have theoretical and practical implications for empirical work in economics, in addition to providing a theory-based explanation for prior simulation findings documenting better numerical performance of inference procedures motivated by small bandwidth asymptotics relative to those motivated by classical asymptotically linear distributional approximations.

The closest antecedent to our work is \citet{Nishiyama-Robinson_2000_ECMA,Nishiyama-Robinson_2001_ChBook}, who also studied Edgeworth expansions for kernel-based DWAD estimators. Their expansions, however, were motivated by the asymptotic linear approximation of the point estimator, and hence cannot be used to compare and contrast to the distributional approximation emerging from the alternative small bandwidth asymptotic regime. Therefore, from a technical perspective, this paper also offers novel Edgeworth expansions that allow for different standardization and studentization schemes, thereby allowing us to plug-and-play when juxtaposing the two asymptotic approximation frameworks. More specifically, Theorem \ref{thm:EE-standard} below concerns a generic standardized t-statistic and is proven based on Theorem \ref{App A: Second-Order U-Statistics} in the appendix, which may be of independent technical interest due to is generality. In contrast, Theorem \ref{thm:EE-student} below concerns a more specialized class of studentized t-statistic because establishing valid Edgeworth expansions is considerably harder when dealing with studentization.

The idea of employing more general asymptotic approximation frameworks that do not enforce asymptotic linearity for two-step semiparametric estimators has also featured in other contexts: (i) semi-linear series-based, many covariates, and many instruments estimation \citep{Cattaneo-Jansson-Newey_2018_ET,Cattaneo-Jansson-Newey_2018_JASA}, (ii) non-linear two-step semiparametric estimation \citep{Cattaneo-Crump-Jansson_2013_JASA,Cattaneo-Jansson_2018_ECMA,Cattaneo-Jansson-Ma_2019_RESTUD,Cattaneo-Jansson_2022_ET}, and (iii) network estimation \citep{Matsushita-Otsu_2021_Biometrika}. While our theoretical developments and results focus specifically on the case of kernel-based DWAD estimation, our main conceptual conclusions can be extrapolated to those settings as well. The main takeaway is that employing alternative asymptotic frameworks can deliver improved inference with smaller higher-order distributional approximation errors, thereby offering more robust inference procedures in finite samples. Furthermore, our theoretical and methodological results can also be leveraged to study the higher-order distributional properties of bootstrap-based methods for inference. Although a complete theoretical analysis is beyond the scope of this paper, we provide further discussion about the bootstrap in Section \ref{sec: Higher-Order Distribution Theory}.

The paper continues as follows. Section \ref{sec: Setup} introduces the setup and main assumptions. Section \ref{sec: First-Order Distribution Theory} reviews the classical first-order distributional approximation based on asymptotic linearity and the more general small bandwidth distributional approximation, along with their corresponding choices of standard error formulas. Section \ref{sec: Higher-Order Distribution Theory} presents the main results of our paper. Section \ref{sec: Conclusion} concludes. The appendix is organized in three parts: Appendix \ref{App A: Second-Order U-Statistics} provides a self-contained generic Edgeworth expansion for second-order U-statistics, which may be of independent technical interest, Appendix \ref{App B: Proof of Theorem EE-standard} gives the proof of Theorem \ref{thm:EE-standard}, and Appendix \ref{App C: Proof of Theorem EE-student} gives the proof of Theorem \ref{thm:EE-student}.

\section{Setup and Assumptions}\label{sec: Setup}

Suppose $Z_i=(Y_i, X_i')'$, $i = 1,\dots,n$, is a random sample from the distribution of the random vector $Z=(Y, X')'$, where $Y$ is an outcome variable and $X$ takes values on $\R^d$ with Lebesgue density $f$. We consider
\[\theta := \E[f(X)\dot{g}(X)], \qquad g(X) := \E[Y|X],\]
the DWAD of the regression function $g$, where, for any (differentiable) function $a$, $\dot{a}(x)$ denotes $\partial a(x) / \partial x$, and where existence of $\theta$ is implied by parts (b) and (c)) of the following assumption, which collects the regularity conditions under which our subsequent analysis will proceed.

\begin{assumption}\label{A:DGP}
    For some $S \in \N$, the following are satisfied:
    \begin{enumerate}[(a)]
        \item $\E[|Y|^3]<\infty$;
        \item $f$ is $(S+1)$ times differentiable, and $f$ and its first $(S+1)$ derivatives are bounded;
        \item $g$ is $(S+1)$ times differentiable and its first three derivatives are bounded;
        \item $e$ and its first $(S+1)$ derivatives are bounded, where $e(X) := f(X)g(X)$;
        \item $\E[\V(Y|X)f(X)]>0$ and $\Sigma := \V[\psi(Z)]$ is positive definite, where $\psi(Z) := 2[\dot{e}(X) -Y\dot{f}(X) - \theta]$ and where $\V[\cdot]$ denotes the variance;
        \item $v$ is twice differentiable, its first two derivatives are bounded, and $v\dot{f}$ and $\E[|Y|^3|X] f(X)$ are bounded, where $v(X) := \E[Y^2|X]$;
        \item $\limsup_{||x|| \to \infty} [1 + v(x)]f(x) = 0$, where $||\cdot||$ is the Euclidean norm; and
        \item Cram\'er Condition: For every $\Lincom\in\R^d,$ 
        \begin{equation*}
            \limsup\limits_{|t|\to\infty}\left|\E\left[\exp\left(\iota t\psi_\Lincom(Z)\right)\right]\right|<1,
        \end{equation*}
        where $\psi_\Lincom(Z) := \Lincom'\psi(Z)$ and $\iota^2 := -1$.
    \end{enumerate}
\end{assumption}

Under Assumption \ref{A:DGP} and using integration by parts, the DWAD vector can be expressed as
\begin{equation*}
    \theta = -2\E[Y \dot{f}(X)],
\end{equation*}
which motivates the celebrated plug-in analog estimator of \citet{Powell-Stock-Stoker_1989_ECMA} given by
\begin{equation*}
    \widehat{\theta} = -2 n^{-1} \sum_{1 \leq i \leq n} Y_i \frac{\partial}{\partial x}\widehat{f_i}(X_i), \qquad \widehat{f_i}(x) = (n-1)^{-1} \sum_{\substack{1 \leq j \leq n \\ j \neq i}}  \frac{1}{h^d} K\left(\frac{X_j-x}{h}\right),
\end{equation*}
where $\widehat{f_i}(\cdot)$ is a ``leave-one-out'' kernel density estimator employing a symmetric and differentiable kernel function $K:\mathbb{R}^{d}\rightarrow \mathbb{R}$ and a positive vanishing (bandwidth) sequence $h$.

The estimator $\widehat{\theta}$ can be expressed as a second-order U-statistic with an $n$-varying kernel:
\begin{equation}\label{eq:estimator}
    \widehat{\theta} = \binom{n}{2}^{-1}\sum_{1 \leq i < j \leq n} U_{ij}, \qquad
    U_{ij} := -h^{-d-1}\dot{K}\left(\frac{X_i-X_j}{h}\right)(Y_i-Y_j).
\end{equation}
Our analysis of $\widehat{\theta}$ is based on this representation and proceeds under the following assumption about the kernel function.\footnote{In Assumption \ref{A:kernel} (c) and elsewhere, we employ standard multi-index notation: For $a := (a_1,\dots,a_d)' \in \Z_+^d$, we have (i) $[a] := a_1+\dots +a_d$, (ii) $a! := a_1!\dots a_d!$, (iii) $x^a := x_1^{a_1}\dots x_d^{a_d}$ for $x := (x_1,\dots,x_d)' \in \R^d$, and (iv) $\partial^a q(x) / \partial x^a := \partial^{[a]} q(x) / (\partial x_1^{a_1}\dots \partial x_d^{a_d})$ for (sufficiently smooth) $q:\R^d\to\R$.}

\begin{assumption}\label{A:kernel}
    For some $P \geq 2$ and some $\{\mu_a: a \in \Z^d_+, [a] = P\} \subseteq \R$, the following are satisfied:
    \begin{enumerate}[(a)]
        \item $K$ is even, differentiable, and $\dot{K}$ is bounded;
        \item $\int_{\R^d} \dot{K}(u)\dot{K}(u)' \mathrm{d}u $ is positive definite; and
        \item $\int_{\R^d} |K(u)|(1+\|u\|^P) \mathrm{d}u + \int_{\R^d} \|\dot{K}(u)\|(1+\|u\|^2) \mathrm{d}u<\infty$ and
        \begin{equation*}
            \int_{\R^d} u^a K(u) \mathrm{d}u =
            \begin{cases}
                1, &\text{if } [a] =  0,\\
                0, &\text{if } 0< [a] < P\\
                \mu_a, &\text{if } [a] = P,
            \end{cases}
        \end{equation*}
        where $a \in \Z^d_+$ is a multi-index.
    \end{enumerate}
\end{assumption}

\section{First-Order Distribution Theory}\label{sec: First-Order Distribution Theory}

Before presenting our main results concerning the higher-order distributional properties of different statistics based on $\widehat{\theta}$, we review conventional and alternative first-order asymptotic distributional approximations, as well as the distinct variance estimation methods emerging from each of those approximation frameworks. Limits are taken as $h\to0$ and $n\to\infty$ unless otherwise noted, $\to_\P$ denotes convergence in probability, and $\rightsquigarrow$ denotes convergence in law.

\subsection{Distributional Approximation}

Under appropriate restrictions on $h$ and $K$, the estimator $\widehat{\theta}$ is asymptotically linear with influence function $\psi$ and asymptotic variance $\Sigma$. More precisely, \citet{Powell-Stock-Stoker_1989_ECMA} showed that if Assumptions \ref{A:DGP} and \ref{A:kernel} hold and if $nh^{2(P \land S)}\to 0$ and $nh^{d+2}\to \infty $ (where $a \land b$ denotes $\min(a,b)$), then
\begin{equation}\label{eq:AL-Distribution}
    \sqrt{n}(\widehat{\theta} - \theta) = n^{-1/2} \sum_{1 \leq i \leq n} \psi(Z_i) + o_\P(1)
    \rightsquigarrow \m{N}(0, \Sigma).
\end{equation}
A proof of \eqref{eq:AL-Distribution} can be based on the $U$-statistic representation in \eqref{eq:estimator} and its Hoeffding decomposition $\widehat{\theta} = \E[U_{ij}] + \bar{L} + \bar{Q}$, where $\bar{L}$ and $\bar{Q}$ are mean zero random vectors given by
\begin{equation*}
    \bar{L} := n^{-1} \sum_{1 \leq i \leq n} L_i, \qquad L_i := 2(\E[U_{ij}|Z_i] - \E[U_{ij}]),
\end{equation*}
and
\begin{equation*}
    \bar{Q} := \binom{n}{2}^{-1}\sum_{1 \leq i < j \leq n} Q_{ij}, \qquad Q_{ij} := U_{ij} - \E[U_{ij}|Z_i] - \E[U_{ij}|Z_j] + \E[U_{ij}],
\end{equation*}
respectively: because $\E[U_{ij}] = \theta + O(h^{P \land S})$ and $\V[\bar{Q}]=O(n^{-2} h^{-d-2})$, we have
\begin{equation*}
    \sqrt{n}(\widehat{\theta} - \theta) = n^{-1/2} \sum_{i=1}^n L_i + O_\P\left(\sqrt{n}h^{P \land S} + \frac{1}{\sqrt{n h^{d+2}}}\right),
\end{equation*}
from which the result \eqref{eq:AL-Distribution} follows upon noting that $\V[L_i - \psi(Z_i)] = O(h^{P \land S})$. Using Edgeworth expansions, \cite{Nishiyama-Robinson_2000_ECMA,Nishiyama-Robinson_2001_ChBook} studied the quality of the distributional approximation implied by \eqref{eq:AL-Distribution}; their result is contained as a special case of our Theorem \ref{thm:EE-standard}.

The Hoeffding decomposition and subsequent analysis of each of its terms shows that the estimator admits a bilinear form representation in general, which then is reduced to a sample average approximation by assuming a bandwidth sequence and kernel shape that makes both the misspecification error (smoothing bias) and the variability introduced by $\bar{Q}$ (a ``quadratic'' term) negligible in large samples. As a result, provided that such tuning parameter choices are feasible, the estimator will be asymptotically linear.

Asymptotic linearity of a semiparametric estimator has several distinct features that may be considered attractive from a theoretical point of view. In particular, it is a necessary condition for semiparametric efficiency, and it leads to a limiting distribution that is invariant to the choice of the first-step nonparametric estimator entering the two-step semiparametric procedure \citep{Newey_1994_ECMA}. However, insisting on asymptotic linearity may also have its drawbacks because it requires several potentially strong assumptions, and because it leads to a large sample theory that may not accurately represent the finite sample behavior of the statistic. In the case of $\widehat{\theta}$, asymptotic linearity requires $P>2$ unless $d=1$; that is, the use of higher-order kernels or similar debiasing techniques \citep[see, e.g.,][and references therein]{Chernozhukov-etal_2022_ECMA} is necessary in order to achieve asymptotic linearity. In addition, asymptotic linearity leads to a limiting experiment which is invariant to the particular choices of smoothing ($K$) and bandwidth ($h$) tuning parameters involved in the construction of the estimator. As a result, large sample distribution theory based on (or implying) asymptotic linearity is silent with respect to the impact that tuning parameter choices may have on the finite sample behavior of the two-step semiparametric statistic.

To address the aforementioned limitations of distribution theory based on asymptotic linearity, \citet{Cattaneo-Crump-Jansson_2014a_ET} proposed a more general distributional approximation for kernel-based DWAD estimators that accommodates, but does not enforce, asymptotic linearity. The idea is to characterize the joint asymptotic distributional features of both the linear ($\bar{L}$) and quadratic ($\bar{Q}$) terms, and in the process develop a more general first-order asymptotic theory that allows for weaker assumptions than those imposed in the classical asymptotically linear distribution theory. Formally, if Assumptions \ref{A:DGP} and \ref{A:kernel} hold,
and if $(nh^{d+2} \land 1) nh^{2(P \land S) }\to 0$ and $n^{2}h^{d}\to \infty$, then
\begin{equation}\label{eq:SB-Distribution}
    \V[\widehat{\theta}]^{-1/2}(\widehat{\theta}-\theta)\rightsquigarrow\m{N}(0,I),
\end{equation}
where $\V[\widehat{\theta}] = \V[\bar{L}] + \V[\bar{Q}]$ with
\begin{equation*}
    \V[\bar{L}] = n^{-1} \left[ \Sigma + O\left(h^{P \land S}\right) \right]
\end{equation*}
and
\begin{equation*}
    \V[\bar{Q}] = \binom{n}{2}^{-1} h^{-d-2} \left[ \Delta + O(h^2) \right], \qquad \Delta := 2\E[\V(Y|X)f(X)] \int_{\mathbb{R}^{d}}\dot{K}(u) \dot{K}(u)'\text{d}u.
\end{equation*}

This more general distributional approximation was developed explicitly in an attempt to better characterize the finite sample behavior of $\widehat{\theta}$. The result in \eqref{eq:SB-Distribution} shows that the conditions on the bandwidth sequence may be
considerably weakened without invalidating the limiting Gaussian distribution, although the asymptotic variance formula changes. Importantly, if $nh^{d+2} $ is bounded then $\widehat{\theta}$ is no longer asymptotically linear and its limiting distribution will cease to be invariant with respect to the underlying preliminary nonparametric estimator. In particular, if $nh^{d+2}\to c >0$ then $\widehat{\theta}$ is root-$n$ consistent, but not asymptotically linear. The bias of the estimator is also controlled in a different way because the bandwidth is allowed to be ``smaller'' than usual, which may remove the need for higher-order kernels. Interestingly, \eqref{eq:SB-Distribution} allows for the point estimator to not even be consistent for $\theta$, which occurs for sufficiently small bandwidth sequences.

Beyond the aforementioned technical considerations, the result in \eqref{eq:SB-Distribution} can conceptually be interpreted as a more refined first-order distributional approximation for $\widehat{\theta}$, which by relying on a quadratic approximation (i.e., accounting for the contributions of both $\bar{L}$ and $\bar{Q}$) is expected to offer a ``better'' distributional approximation than approximations relying on asymptotic linearity (i.e., accounting only for the contribution of $\bar{L}$). The idea of standardizing a U-statistic by the joint variance of the linear and quadratic terms underlying its Hoeffding decomposition can be traced back to the original paper of \citet[p. 307]{Hoeffding_1948_AMS}. Simulation evidence reported in \citet{Cattaneo-Crump-Jansson_2010_JASA,Cattaneo-Crump-Jansson_2014a_ET,Cattaneo-Crump-Jansson_2014b_ET} corroborated those conceptual interpretations numerically, but no formal justification is available in the literature. Theorem \ref{thm:EE-standard} below will offer the first theoretical result in the literature highlighting specific robustness features of the distributional approximation in \eqref{eq:SB-Distribution} by showing that such approximation has a demonstrably smaller higher-order distributional approximation error.  

\subsection{Variance Estimation}

Motivated by the asymptotic linearity result \eqref{eq:AL-Distribution}, \citet{Powell-Stock-Stoker_1989_ECMA} also proposed the ``plug-in'' variance estimator
\begin{equation*}
    \widehat{\Sigma} := n^{-1} \sum_{1 \leq i \leq n} \widehat{L}_i\widehat{L}_i', \qquad
    \widehat{L}_i := 2\Big[ (n-1)^{-1} \sum_{\substack{1 \leq j \leq n \\ j \neq i}} U_{ij} -\widehat{\theta} \Big],
\end{equation*}
and proved its consistency (i.e., $\widehat{\Sigma}\to_\P \Sigma$) under the same bandwidth sequences required for asymptotic linearity (i.e., assuming $nh^{2(P \land S)}\to 0$ and $nh^{d+2}\to \infty $). Combining this consistency result with \eqref{eq:AL-Distribution}, we obtain the following result about a studentized version of $\widehat{\theta}$:
\begin{equation}\label{eq:T-stat-AL}
    \widehat{V}_\AL^{-1/2} (\widehat{\theta}-\theta)\rightsquigarrow\m{N}(0,I), \qquad \widehat{V}_\AL := n^{-1} \widehat{\Sigma}.
\end{equation}
Using Edgeworth expansions, \cite{Nishiyama-Robinson_2000_ECMA,Nishiyama-Robinson_2001_ChBook} studied the quality of the distributional approximation implied by \eqref{eq:T-stat-AL}; their result is contained as a special case of our Theorem \ref{thm:EE-student}.

Complementing the small bandwidth asymptotic representation \eqref{eq:SB-Distribution}, \citet{Cattaneo-Crump-Jansson_2014a_ET} showed that
\begin{equation*}
    \widehat{V}_\AL = n^{-1} [\Sigma + o_\P(1) ] + 2\binom{n}{2}^{-1} h^{-d-2}[\Delta+o_\P(1)],
\end{equation*}
which implies among other things that the consistency result $\widehat{\Sigma}\to_\P \Sigma$ is valid only if $nh^{d+2}\to \infty$; otherwise, $\widehat{\Sigma}$ is in general asymptotically upwards biased relative to $\V[\widehat{\theta}]$ in \eqref{eq:SB-Distribution}. Because $\widehat{\Sigma}$ is asymptotically equivalent to the jackknife variance estimator of $\widehat{\theta}$, \citet{Cattaneo-Crump-Jansson_2014b_ET} also noted that the asymptotic bias of $\widehat{\Sigma}$ is a consequence of a more generic phenomena underlying jackknife variance estimators studied in \citet{Efron-Stein_1981_AOS}. See also \citet{Matsushita-Otsu_2021_Biometrika} for related discussion. 

To conduct asymptotically valid inference under the more general small bandwidth asymptotic regime, \citet{Cattaneo-Crump-Jansson_2014a_ET} proposed several ``debiased'' variance estimators, including
\begin{equation*}
    \widehat{V}_\SB := n^{-1} \widehat{\Sigma} - \binom{n}{2}^{-1}h^{-d-2}\widehat{\Delta},\qquad \widehat{\Delta} := h^{d+2} \binom{n}{2}^{-1} \sum_{1 \leq i < j \leq n} U_{ij} U_{ij}',
\end{equation*}
and showed that $\widehat{\Delta}\to_\P \Delta$ under the same bandwidth sequences required for \eqref{eq:SB-Distribution} to hold (i.e., assuming $nh^{2(P \land S)}\to 0$ and $n^2 h^d \to \infty$). The estimator $\widehat{\Delta}$ is asymptotically equivalent to the debiasing procedure proposed in \citet{Efron-Stein_1981_AOS}. By design, the result
\begin{equation}\label{eq:T-stat-SB}
    \widehat{V}_\SB^{-1/2}(\widehat{\theta}-\theta)\rightsquigarrow\m{N}(0,I)
\end{equation}
holds under more general conditions than those required for \eqref{eq:T-stat-AL}, suggesting that inference procedures based on $\widehat{V}_\SB$ are more ``robust'' than procedures based on $\widehat{V}_\AL$.

Conceptually, robustness manifests itself in two distinct ways. First, the underlying Gaussian distributional approximation holds under weaker bandwidth restrictions, a property achieved in part by employing a standardization factor depending explicitly on tuning parameter choices. Second, the new variance estimator $\widehat{V}_\SB$ is obtained from the more general small bandwidth approximation and explicitly accounts for the contribution of terms regarded as higher-order under asymptotic linearity.

While not reproduced here to conserve space, the in-depth Monte Carlo evidence reported in \citet{Cattaneo-Crump-Jansson_2010_JASA,Cattaneo-Crump-Jansson_2014a_ET,Cattaneo-Crump-Jansson_2014b_ET} also showed that employing inference procedures based on \eqref{eq:T-stat-SB} lead to remarkable improvements in terms of ``robustness'' to bandwidth choice and other tuning inputs, when compared to classical asymptotically linear inference procedures based on \eqref{eq:T-stat-AL}. Theorem \ref{thm:EE-student} below will show formally that the distributional approximation \eqref{eq:T-stat-SB} has demonstrably smaller higher-order errors than the distributional approximation \eqref{eq:T-stat-AL}, thereby providing a theory-based explanation for the empirical success of feasible inference procedures developed under the small bandwidth approximation framework.

\section{Higher-Order Distribution Theory}\label{sec: Higher-Order Distribution Theory}

Letting $\Lincom\in\R^d$ be a fixed vector and defining $\widehat{\theta}_\Lincom:=\Lincom'\widehat{\theta}$, this section presents Edgeworth expansions for standardized and studentized statistics based on $\widehat{\theta}_\Lincom$. Section \ref{subsec: Standardized Statistics} studies standardized statistics of the form $(\widehat{\theta}_\Lincom - \theta_\Lincom)/\Scale_\Lincom$, where $\theta_\Lincom:=\Lincom'\theta$ and $\Scale_\Lincom$ is an approximate standard deviation of $\widehat{\theta}_\Lincom$; that is, $\Scale_\Lincom$ is positive, non-random, and such that $(\widehat{\theta}_\Lincom - \theta_\Lincom)/\Scale_\Lincom$ is asymptotically standard normal. The main purpose of studying standardized statistics is to allow us to compare the quality of the distributional approximations \eqref{eq:AL-Distribution} and \eqref{eq:SB-Distribution} based on asymptotic linearity and small bandwidth asymptotics, respectively. Section \ref{subsec: Studentized Statistics} then studies studentized statistics of the form $(\widehat{\theta}_\Lincom - \theta_\Lincom)/\widehat{\Scale}_\Lincom$, where $\widehat{\Scale}_\Lincom^2$ is (random and) equal to either $\Lincom'\widehat{V}_\AL\Lincom$ or $\Lincom'\widehat{V}_\SB\Lincom$. The main purpose of studying studentized statistics is to allow us to investigate the impact of variance estimation on the quality of the distributional approximations \eqref{eq:T-stat-AL} and \eqref{eq:T-stat-SB} based on asymptotic linearity and small bandwidth asymptotics, respectively.

\citet{Nishiyama-Robinson_2000_ECMA,Nishiyama-Robinson_2001_ChBook} obtained valid Edgeworth expansions for the distribution of the standardized and studentized statistics employing $\Scale_\Lincom^2 = \Lincom'\Sigma\Lincom/n$ and $\widehat{\Scale}_\Lincom^2 = \Lincom'\widehat{\Sigma}\Lincom/n$, respectively. Those results were obtained under assumptions implying asymptotic linearity. Although our main interest is in standardization and studentization schemes whose (first-order) validity does not require asymptotic linearity, we retain the assumption of asymptotic linearity to ensure a fair comparison; that is, our results are derived under the same assumptions as those imposed in prior work, in which case all inference procedures are asymptotically valid, and therefore amenable to juxtaposition. While beyond the scope of this paper, allowing for departures from asymptotic linearity is an interesting topic for future research.

\subsection{Standardized Statistics}\label{subsec: Standardized Statistics}

Suppressing the dependence on $\Lincom$ and $\Scale_\Lincom$, let
\begin{equation*}
    F(x) := \P\left[\frac{\widehat{\theta}_\Lincom - \theta_\Lincom}{\Scale_\Lincom} \leq x\right], \qquad x\in\R,
\end{equation*}
be the cumulative distribution function (cdf) of $(\widehat{\theta}_\Lincom - \theta_\Lincom)/\Scale_\Lincom$, where $\Scale_\Lincom$ is positive and non-random. Letting $\Phi$ denote the standard normal cdf, it follows from \eqref{eq:SB-Distribution} that
\begin{equation}\label{eq:SB-Distribution-restatement}
    \sup_{x\in \R}\left|F(x)- \Phi(x)\right| = o(1)
\end{equation}
under assumptions implying in particular that $\omega_\Lincom^2/\Scale_\Lincom^2 \to 1$, where
\begin{equation*}
    \omega_\Lincom^2 := \V[\widehat{\theta}_\Lincom] = n^{-1} \left[\sigma_\Lincom^2 + O\left(h^{P \land S}\right) \right] + \binom{n}{2}^{-1} h^{-d-2}[\delta_\Lincom^2 +O(h^2)],
\end{equation*}
with $\sigma_\Lincom^2 := \Lincom'\Sigma\Lincom$ and $\delta_\Lincom^2 := \Lincom'\Delta\Lincom$.

Our first theorem provides a refinement of \eqref{eq:SB-Distribution-restatement}. To state the theorem, let
\begin{equation*}
    \dot{f}_\Lincom(X) := \Lincom'\dot{f}(X), \qquad \varphi_\Lincom(Z) := \psi_\Lincom(Z) + 2\theta_\Lincom, \qquad \eta_\Lincom(Z_2) := \lim_{n\to\infty} \E[\varphi_\Lincom(Z_1)\Lincom'U_{12}|Z_2],
\end{equation*}
and define the following quantities (all of which are finite under the assumptions of Theorem \ref{thm:EE-standard}):
\begin{align*}
    \beta_\Lincom &:= 2(-1)^P \sum_{a \in \Z_+^d, [a] = P}\frac{\mu_a}{a!}\E\left[g(X) \frac{\partial^{a}}{\partial x^{a}}\dot{f}_\Lincom(X)\right], \qquad \kappa_{1,\Lincom} := \E[\psi_\Lincom(Z)^3],\\
    \kappa_{2,\Lincom} &:= \E[\varphi_\Lincom(Z)\dot{\eta_\Lincom}(Z)] - \E[\varphi_\Lincom(Z)^2]\theta_\Lincom - \V[\varphi_\Lincom(Z)]\theta_\Lincom.
\end{align*}
Also, let $\phi$ denote the standard normal probability density function.

\begin{thm}[Standardization] \label{thm:EE-standard}
    Suppose Assumptions \ref{A:DGP} and \ref{A:kernel} hold with $S \geq P$, and that $nh^{2P}\to 0$ and $nh^{d+2}/\log^9 n \to\infty$. If $\Scale_\Lincom$ is positive and non-random with $\omega_\Lincom^2/\Scale_\Lincom^2 \to 1$, then
    \begin{equation*}
        \sup_{x\in \R}\left|F(x)- G(x)\right| = o(r_n), \qquad r_n := \sqrt{n}h^P + \frac{1}{nh^{d+2}} + \frac{1}{\sqrt{n}},
    \end{equation*}
    with
    \begin{equation*}
        G(x) := \Phi(x) -\phi(x)\left[\frac{\sqrt{n}h^P\beta_\Lincom}{\sigma_\Lincom} + \frac{\omega_\Lincom^2/\Scale_\Lincom^2-1}{2} x + \frac{\kappa_{1,\Lincom}+\kappa_{2,\Lincom}}{6\sqrt{n}\sigma_\Lincom^3}(x^2-1)\right].
    \end{equation*}
\end{thm}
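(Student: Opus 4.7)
The strategy is to reduce Theorem \ref{thm:EE-standard} to the generic Edgeworth expansion for second-order U-statistics provided by Theorem \ref{App A: Second-Order U-Statistics}, applied to the scalar $n$-varying kernel $\Lincom'U_{ij}$. I would begin with the Hoeffding decomposition $\widehat{\theta}_\Lincom = \theta_{n,\Lincom} + \bar{L}_\Lincom + \bar{Q}_\Lincom$, where $\theta_{n,\Lincom} := \E[\Lincom'U_{12}]$, and split
\begin{equation*}
\frac{\widehat{\theta}_\Lincom - \theta_\Lincom}{\Scale_\Lincom} \;=\; B_n \;+\; \frac{\bar{L}_\Lincom + \bar{Q}_\Lincom}{\Scale_\Lincom}, \qquad B_n := \frac{\theta_{n,\Lincom} - \theta_\Lincom}{\Scale_\Lincom}.
\end{equation*}
A change-of-variables and $P$-th order Taylor expansion inside the kernel integral (under Assumptions \ref{A:DGP}(c)--(d), \ref{A:kernel}(c), and $S \geq P$, together with integration by parts using $\int\dot K(u)u'\mathrm{d}u = -I$) yields $\theta_{n,\Lincom} - \theta_\Lincom = h^P\beta_\Lincom + o(h^P)$, so $B_n = \sqrt{n}\,h^P\beta_\Lincom/\sigma_\Lincom + o(r_n)$ since $\Scale_\Lincom^2 \sim \omega_\Lincom^2 \sim \sigma_\Lincom^2/n$.

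Next, I would apply the generic expansion to the centered U-statistic $\bar{L}_\Lincom + \bar{Q}_\Lincom$ to obtain, uniformly in $x$,
\begin{equation*}
\P\!\left[\frac{\bar{L}_\Lincom + \bar{Q}_\Lincom}{\omega_\Lincom} \leq x\right] = \Phi(x) - \phi(x)\,\frac{\kappa_{1,\Lincom} + \kappa_{2,\Lincom}}{6\sqrt{n}\sigma_\Lincom^3}(x^2-1) + o(r_n).
\end{equation*}
The third-cumulant coefficient splits into $\kappa_{1,\Lincom} = \E[\psi_\Lincom(Z)^3]$, the classical contribution from $n\,\E[\bar{L}_\Lincom^3]$ (after absorbing the $O(h^P)$ remainder between $\E[L_i^3]$ and $\E[\psi_\Lincom(Z)^3]$), and $\kappa_{2,\Lincom}$, which captures the cross-contribution from $3n\,\E[\bar{L}_\Lincom^2 \bar{Q}_\Lincom]$ and similar third-order cross moments. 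The compact form of $\kappa_{2,\Lincom}$ is obtained by iterated conditioning: using $L_i = \varphi_\Lincom(Z_i) - 2\theta_\Lincom + o_\P(1)$ together with $Q_{ij} = \Lincom'U_{ij} - \E[\Lincom'U_{ij}\mid Z_i] - \E[\Lincom'U_{ij}\mid Z_j] + \theta_{n,\Lincom}$, the relevant integrals collapse, under Assumptions \ref{A:DGP}--\ref{A:kernel}, to the stated expression in terms of $\varphi_\Lincom$, $\eta_\Lincom$, and $\theta_\Lincom$.

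Finally, I would convert the scaling from $\omega_\Lincom$ to $\Scale_\Lincom$ via the identity $\Phi(x\Scale_\Lincom/\omega_\Lincom) = \Phi(x) - \phi(x)(\omega_\Lincom^2/\Scale_\Lincom^2 - 1)\,x/2 + o(r_n)$ uniformly in $x$ (first-order Taylor expansion of $\Phi$, using boundedness of $|x\phi(x)|$ and $|x^2\phi(x)|$), and then shift the argument by $B_n$ using $\Phi(x - B_n) = \Phi(x) - \phi(x)B_n + O(B_n^2)$, where $O(B_n^2) = O(nh^{2P}) = o(\sqrt{n}\,h^P) = o(r_n)$ because $nh^{2P} \to 0$. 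Collecting the three corrections yields $\sup_{x\in\R}|F(x) - G(x)| = o(r_n)$. The main obstacle is establishing the underlying generic Edgeworth expansion at the uniform rate $r_n$: since the law of $U_{12}$ depends on $h$, the standard Cramér condition cannot be imposed on $U_{12}$ directly. One must instead decompose the characteristic function of $\bar{L}_\Lincom + \bar{Q}_\Lincom$ into pieces driven by $\psi_\Lincom$ (to which Assumption \ref{A:DGP}(h) applies) and pieces driven by the smoothing action of $h^{-d-2}\dot K$; this decomposition is what dictates the technical hypothesis $nh^{d+2}/\log^9 n \to \infty$ needed for tail truncation and Fourier inversion in Theorem \ref{App A: Second-Order U-Statistics}.
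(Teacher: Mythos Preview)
Your strategy is essentially the paper's: both reduce Theorem~\ref{thm:EE-standard} to the generic second-order U-statistic Edgeworth expansion of Appendix~\ref{App A: Second-Order U-Statistics}, applied with $u_n(Z_i,Z_j)=\Lincom'U_{ij}$ and $p=3$, verify the Cram\'er-type condition via the fixed function $\psi_\Lincom$ (exactly as in \eqref{Approximate Cramer condition}), and identify the bias, variance, and skewness corrections with $\beta_\Lincom$, $\omega_\Lincom^2/\Scale_\Lincom^2-1$, and $\kappa_{1,\Lincom}+\kappa_{2,\Lincom}$.

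The one organizational difference is worth flagging. You apply the expansion with the specific scale $\omega_\Lincom$ and then perform two post-hoc Taylor adjustments (a shift by $B_n$ and a rescaling from $\omega_\Lincom$ to $\Scale_\Lincom$). The paper instead applies Corollary~\ref{coro:EE-ustat} \emph{directly} with the generic centering $\Center=\theta_\Lincom$ and generic scale $\Scale=\Scale_\Lincom$: the bias correction then appears automatically as $\gamma_1=B/\Scale$, the variance correction as $\gamma_2=(\omega^2-\Scale^2)/(2\Scale^2)$, and the skewness as $\gamma_3$, while the remaining coefficients $\gamma_4,\dots,\gamma_9$ are shown to be $o(r_n)$ and $\mathcal{E}=o(n^{-1}h^{-d-2})$ via the moment bounds $\E[|q_{12}|^3]\ls h^{-2d-3}$ and $\E[(\E[q_{12}^2|Z_1])^{3/2}]\ls h^{-3d/2-3}$. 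This route is slightly cleaner than yours because your rescaling step $\Phi(x\Scale_\Lincom/\omega_\Lincom)=\Phi(x)-\phi(x)(\omega_\Lincom^2/\Scale_\Lincom^2-1)x/2+o(r_n)$ carries a remainder of order $(\omega_\Lincom^2/\Scale_\Lincom^2-1)^2$, and the stated hypothesis $\omega_\Lincom^2/\Scale_\Lincom^2\to 1$ by itself does not force that remainder to be $o(r_n)$; building the generic $\Scale$ into the characteristic-function expansion keeps $\gamma_2$ as an exact coefficient rather than a first-order Taylor approximation.
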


The proof of the theorem proceeds by verifying the high-level conditions of more general results presented in Appendix \ref{App A: Second-Order U-Statistics}. The general results establish a valid Edgeworth expansion for a generic class of U-statistics with $n$-varying kernels and may be of independent theoretical interest. Theorem \ref{thm:EE-standard} generalizes \citet[Theorem 1]{Nishiyama-Robinson_2000_ECMA} by allowing for a generic standardization factors $\Scale_\Lincom$ instead of their specific choice $\sqrt{\Lincom'\Sigma\Lincom/n} = \sigma_\Lincom/\sqrt{n}$. The latter generalization is important for our purposes, as it enables us to compare the different distributional approximations implied by \eqref{eq:AL-Distribution} and \eqref{eq:SB-Distribution}.

As is customary with Edgeworth expansions, the square-bracketed term in the function $G$ is a ``correction'' term capturing the extent to which the first three cumulants of the statistic differ from those of the standard normal distribution. To be specific, the first and third terms correct for bias and skewness, respectively. None of these correction terms depend on the particular $\Scale_\Lincom$ used for standardization purposes. In contrast, and as was to be expected, the variance correction term does depend on $\Scale_\Lincom$, being proportional to $\omega_\Lincom^2/\Scale_\Lincom^2 - 1$.

The asymptotic linearity result \eqref{eq:AL-Distribution} suggests setting $\Scale_\Lincom^2=\sigma_\Lincom^2/n$. Doing so, and in agreement with \citet[Theorem 1]{Nishiyama-Robinson_2000_ECMA}, we have
\begin{equation*}
    \omega_\Lincom^2/\Scale_\Lincom^2 - 1 \approx \frac{2\delta_\Lincom^2}{nh^{d+2}\sigma_\Lincom^2},
\end{equation*}
the approximation error being $o(r_n)$. In the display, the term on the right hand side involves $\delta_\Lincom^2 = \lim_{n\to\infty}h^{d+2}\V[\Lincom'Q_{ij}\Lincom]$ and is therefore interpretable as a variance correction term intrinsically associated with approximations based on asymptotic linearity, as such approximations ignore the contribution of the ``quadratic'' terms $Q_{ij}$ to the variability of $\widehat{\theta}$. Unlike \eqref{eq:AL-Distribution}, the small bandwidth formulation \eqref{eq:SB-Distribution} explicitly accounts for the presence of ``quadratic'' terms and the standardization factor $\Scale_\Lincom^2 = \V[\widehat{\theta}_\Lincom] = \omega_\Lincom^2$ suggested by the small bandwidth formulation is one for which the variance correction term in $G$ vanishes altogether.

Although the details of the results reported here are specific to DWAD estimation, one important qualitative conclusion appears to generalize: the feature that it can be advantageous (in a higher-order sense) to capture the full variability of a statistic when approximating its distribution is known to be shared by certain statistics arising in the context of nonparametric kernel-based density and local polynomial regression inference; for details, see \citet{Calonico-Cattaneo-Farrell_2018_JASA,Calonico-Cattaneo-Farrell_2022_Bernoulli}.

In isolation, Theorem \ref{thm:EE-standard} is mostly of theoretical interest, the reason being that it is concerned with standardized (as opposed to studentized) estimators. The consequences of employing studentization (i.e., replacing $\Scale_\Lincom$ with an estimator) will be explored in the next subsection. One important qualitative conclusion of that subsection concerns inference. That conclusion can be anticipated with the help of Theorem \ref{thm:EE-standard}. We conclude this subsection by doing so.

For any $\alpha\in(0,1)$, a natural (albeit infeasible) $100(1-\alpha)\%$ two-sided confidence interval for $\theta_\Lincom$ has endpoints given by $\widehat{\theta}_\Lincom \pm c_\alpha \Scale_\Lincom$, where $c_\alpha := \Phi^{-1}(1-\alpha/2)$ and where $\Scale_\Lincom^2$ is an approximate variance of $\widehat{\theta}_\Lincom$. Under the assumptions of Theorem \ref{thm:EE-standard}, the coverage probability of this interval satisfies
\begin{equation*}
    \P\left[\widehat{\theta}_\Lincom - c_\alpha \Scale_\Lincom \leq \theta_\Lincom \leq \widehat{\theta}_\Lincom + c_\alpha \Scale_\Lincom \right] = 1-\alpha - (\omega_\Lincom^2/\Scale_\Lincom^2 - 1) \phi(c_\alpha) c_\alpha + o(r_n),
\end{equation*}
so to the order considered the coverage error is proportional to the term $\omega_\Lincom^2/\Scale_\Lincom^2 - 1$ discussed previously and our conclusions about this term therefore apply directly. In particular, the coverage error of an infeasible interval using $\Scale_\Lincom^2 = \V[\widehat{\theta}_\Lincom]$ (as suggested by the small bandwidth asymptotic result \eqref{eq:SB-Distribution}) is $o(r_n)$, while the coverage errors of an infeasible intervals using $\Scale_\Lincom^2 = \sigma_\Lincom^2/n$ (as suggested by the asymptotic linearity result \eqref{eq:AL-Distribution}) or its pre-asymptotic counterpart $\Scale_\Lincom^2 = \V[\Lincom'L_i]/n$ are of larger magnitude.

\subsection{Studentized Statistics}\label{subsec: Studentized Statistics}

Next, we investigate the role of variance estimation by obtaining Edgeworth expansions for studentized versions of $\widehat{\theta}$. For specificity, and inspired by \eqref{eq:T-stat-AL} and \eqref{eq:T-stat-SB}, we compare
\begin{equation*}
    \widehat{F}_\AL(x) := \P\left[\frac{\widehat{\theta}_\Lincom - \theta_\Lincom}{\widehat{\Scale}_\mathtt{AL,\Lincom}} \leq x \right],\qquad \widehat{\Scale}_\mathtt{AL,\Lincom}^2 := \Lincom'\widehat{V}_\AL\Lincom
\end{equation*}
and
\begin{equation*}
    \widehat{F}_\SB(x) := \P\left[\frac{\widehat{\theta}_\Lincom - \theta_\Lincom}{\widehat{\Scale}_\mathtt{SB,\Lincom}} \leq  x \right],\qquad \widehat{\Scale}_\mathtt{SB,\Lincom}^2 := \Lincom'\widehat{V}_\SB\Lincom.
\end{equation*}
Studying $\widehat{F}_\AL$, \citet[Theorem 3]{Nishiyama-Robinson_2000_ECMA} found that if the assumptions of Theorem \ref{thm:EE-student} below are satisfied, then
\begin{equation*}
    \sup_{x\in \R}\left|\widehat{F}_\AL(x)- \widehat{G}_\AL(x)\right| = o(r_n),
\end{equation*}
with
\begin{equation*}
    \widehat{G}_\AL(x) := \Phi(x) -\phi(x)\left[\left( \frac{\sqrt{n}h^P\beta_\Lincom}{\sigma_\Lincom} - \frac{3\kappa_{1,\Lincom} + 2\kappa_{2,\Lincom}}{6\sqrt{n}\sigma_\Lincom^3}\right) - \frac{\delta_\Lincom^2}{nh^{d+2}\sigma_\Lincom^2} x - \frac{2\kappa_{1,\Lincom} + \kappa_{2,\Lincom}}{6\sqrt{n}\sigma_\Lincom^3} (x^2 - 1)\right],
\end{equation*}
where, once again, the three terms in square brackets correct for bias, variance, and skewness, respectively. In light of the results of the previous subsection, one would expect the Edgeworth approximation to $\widehat{F}_\SB$ to be similar to $\widehat{G}_\AL$, the only (possible) difference being the variance correction term. The following result shows that this is indeed the case.

\begin{thm}[Studentization]\label{thm:EE-student}
    Suppose Assumptions \ref{A:DGP} and \ref{A:kernel} hold with $S \geq P$ and $\E[Y^6] < \infty$, and that $nh^{2P}\to 0$ and $nh^{d+2}/\log^9 n \to\infty$. Then
    \begin{equation*}
        \sup_{x\in \R}\left|\widehat{F}_\SB(x)- \widehat{G}_\SB(x)\right| = o(r_n),
    \end{equation*}
    with
    \begin{equation*}
        \widehat{G}_\SB(x) := \Phi(x) -\phi(x)\left[\left(\frac{\sqrt{n}h^P\beta_\Lincom}{\sigma_\Lincom} - \frac{3\kappa_{1,\Lincom} + 2\kappa_{2,\Lincom}}{6\sqrt{n}\sigma_\Lincom^3}\right) - \frac{2\kappa_{1,\Lincom} + \kappa_{2,\Lincom}}{6\sqrt{n}\sigma_\Lincom^3} (x^2 - 1)\right]
    \end{equation*}

\end{thm}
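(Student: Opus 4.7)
The plan is to obtain the expansion for $\widehat{F}_\SB$ by translating \citet[Theorem 3]{Nishiyama-Robinson_2000_ECMA}'s expansion for $\widehat{F}_\AL$, exploiting the explicit algebraic relationship $\widehat{V}_\AL - \widehat{V}_\SB = \binom{n}{2}^{-1}h^{-d-2}\widehat{\Delta}$. Writing $T_\bullet := (\widehat{\theta}_\Lincom - \theta_\Lincom)/\widehat{\Scale}_{\bullet,\Lincom}$ and $A_n := \binom{n}{2}^{-1}h^{-d-2}\Lincom'\widehat{\Delta}\Lincom$, one has the pointwise identity
\begin{equation*}
T_\SB = T_\AL\sqrt{1 + A_n/\widehat{\Scale}_{\SB,\Lincom}^2},
\end{equation*}
so the task reduces to (i) controlling $A_n/\widehat{\Scale}_{\SB,\Lincom}^2$, (ii) transferring the resulting stochastic representation of $T_\SB$ to a uniform CDF bound, and (iii) checking that the induced correction to $\widehat{G}_\AL$ coincides with $\widehat{G}_\SB$.

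For step (i), I would show $A_n/\widehat{\Scale}_{\SB,\Lincom}^2 = \tau_n + o_\P(r_n)$ with $\tau_n := 2\delta_\Lincom^2/(nh^{d+2}\sigma_\Lincom^2)$. Since $\widehat{\Delta}$ is itself a second-order U-statistic with $n$-varying kernel, moment bounds analogous to those used for $\widehat{\theta}$ deliver $\widehat{\Delta}_\Lincom = \delta_\Lincom^2 + o_\P(1)$ at a fast enough rate; combined with $\widehat{\Scale}_{\SB,\Lincom}^2 = \sigma_\Lincom^2/n + o_\P(1/n)$, this yields the claim. The strengthened moment condition $\E[Y^6]<\infty$ (relative to $\E[|Y|^3]<\infty$ in Theorem \ref{thm:EE-standard}) provides the integrability needed to control the fourth moments entering these U-statistic bounds. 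A first-order Taylor expansion of the square root, noting $\tau_n = O(1/(nh^{d+2})) = O(r_n)$ and $\tau_n^2 = o(r_n)$, then produces the representation
\begin{equation*}
T_\SB = T_\AL(1 + \tau_n/2) + o_\P(r_n).
\end{equation*}

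For steps (ii)–(iii), combining the display with an anti-concentration estimate for $T_\AL$ and applying the Nishiyama--Robinson bound $\sup_x|\widehat{F}_\AL(x) - \widehat{G}_\AL(x)| = o(r_n)$ on the rescaled argument yields $\widehat{F}_\SB(x) = \widehat{G}_\AL(x/(1+\tau_n/2)) + o(r_n)$ uniformly in $x$. A Taylor expansion of $\widehat{G}_\AL$ about $x$ introduces a shift of $-\phi(x)(\tau_n/2)x + o(r_n)$, since $\widehat{G}_\AL'(x) = \phi(x) + O(r_n)$ uniformly (the Edgeworth correction polynomial being smooth). By direct inspection, this shift exactly cancels the variance-correction term $+\phi(x)\cdot\delta_\Lincom^2/(nh^{d+2}\sigma_\Lincom^2)\cdot x = +\phi(x)(\tau_n/2)x$ already present in $\widehat{G}_\AL$, leaving $\widehat{G}_\SB(x)$ as claimed. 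Conceptually, this mirrors the mechanism behind Theorem \ref{thm:EE-standard}, where the choice $\Scale_\Lincom^2=\omega_\Lincom^2$ eliminates the analogous variance-correction term; here, studentization by $\widehat{V}_\SB$ effects the same cancellation stochastically.

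The main obstacle is the transfer in step (ii): converting an $o_\P(r_n)$ stochastic remainder into a uniform $o(r_n)$ CDF bound. Since $r_n$ can be as small as $1/\sqrt{n}$, a naive Slutsky-style argument producing only $o(1)$ error is inadequate, and one needs an anti-concentration estimate showing that $T_\AL$ assigns probability at most $o(r_n)$ to intervals of length $o_\P(r_n)$ uniformly in location. This bound is in turn underwritten by the smoothness of $\widehat{G}_\AL$ and the generic Edgeworth machinery for second-order U-statistics developed in Appendix \ref{App A: Second-Order U-Statistics}, which must also be used directly to certify analogous probabilistic regularity for the variance estimator components $\widehat{\Sigma}$ and $\widehat{\Delta}$. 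It is this delicate interplay between anti-concentration and higher-moment control—rather than any essentially new calculation of skewness or bias—that makes the studentized case ``considerably harder'' than the standardized one, as noted in the body of the paper.
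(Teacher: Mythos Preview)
Your proposal takes a genuinely different route from the paper's proof. The paper does \emph{not} piggyback on \citet[Theorem 3]{Nishiyama-Robinson_2000_ECMA}; instead it linearizes $1/\widehat{\Scale}_{\SB,\Lincom}$ around a non-random centering $\Scale_{\SB,\Lincom}$ via the algebraic identity
\[
\frac{\Scale}{\widehat{\Scale}} = 1 - \frac{\widehat{\Scale}^2 - \Scale^2}{2\Scale^2} + \frac{(\widehat{\Scale}+2\Scale)(\widehat{\Scale}^2 - \Scale^2)^2}{2\Scale^2 \widehat{\Scale}(\widehat{\Scale} + \Scale)^2},
\]
writes $T_\SB = \widetilde{T}_\SB + R_\SB$, and then (i) controls $R_\SB$ by third-moment bounds on $\widehat{\Scale}_{\SB,\Lincom}^2-\Scale_{\SB,\Lincom}^2$ (this is where $\E[Y^6]<\infty$ enters), and (ii) establishes the Edgeworth expansion for $\widetilde{T}_\SB$ directly by redoing the characteristic-function analysis of Appendix \ref{App A: Second-Order U-Statistics} with the additional variance-estimator term carried through. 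Your multiplicative identity $T_\SB = T_\AL\sqrt{1+A_n/\widehat{\Scale}_{\SB,\Lincom}^2}$ and replacement of the random factor by the deterministic $1+\tau_n/2$ is a legitimate alternative, and your verification that the resulting shift of $\widehat{G}_\AL$ exactly cancels its variance-correction term is correct. The advantage of your route is economy: it recycles the Nishiyama--Robinson expansion rather than re-deriving one from scratch. The advantage of the paper's route is self-containment and greater portability to other studentization schemes (and, as the paper discusses, to the bootstrap).

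That said, your step (ii) is where the real work hides, and your description understates it. The statement $T_\SB = T_\AL(1+\tau_n/2) + o_\P(r_n)$ is too weak for the CDF transfer you need; you require $\P[|R_n|>\epsilon_n]=o(r_n)$ for some $\epsilon_n=o(r_n)$, which in turn demands \emph{high-moment} (not just consistency-rate) control of $\widehat{\Delta}_\Lincom-\delta_\Lincom^2$ and $\widehat{\Scale}_{\SB,\Lincom}^2-\Scale_{\SB,\Lincom}^2$. Obtaining these is essentially the paper's moment-bound step (its equation \eqref{eq:LpError-VarianceEstimator} with $k=2,3$), and it is precisely here that the $\E[Y^6]$ condition is consumed, since the relevant U-statistic kernels involve $u_{ij}^2$. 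So while your framing avoids re-running the characteristic-function expansion, it does not avoid the delicate probabilistic machinery; it merely relocates it into the remainder analysis, and you should expect the effort there to be comparable.
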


This theorem shows that employing studentization based on small bandwidth asymptotics offers demonstrable improvements in terms of distributional approximations for the resulting feasible \textit{t}-test: the variance correction term present in $\widehat{G}_\AL$ is absent from $\widehat{G}_\SB$. As in \citet{Nishiyama-Robinson_2000_ECMA,Nishiyama-Robinson_2001_ChBook}, the result is obtained under the somewhat stronger moment condition $\E[Y^6] < \infty$ than the condition $\E[|Y|^3] < \infty$ of Theorem \ref{thm:EE-standard}, the purpose of the strengthened condition being to help control the contribution of the random denominator of the studentized version of $\widehat{\theta}$.

The main practical implication of Theorem \ref{thm:EE-student} can be illustrated by analyzing the coverage error of $100(1-\alpha)\%$ confidence intervals with endpoints $\widehat{\theta}_\Lincom \pm c_\alpha \widehat{\Scale}_\Lincom$. Setting $\widehat{\Scale}_\Lincom = \widehat{\Scale}_{\AL,\Lincom}$ and applying \citet[Theorem 3]{Nishiyama-Robinson_2000_ECMA}, we have
\begin{equation*}
    \P\left[\widehat{\theta}_\Lincom - c_\alpha \widehat{\Scale}_{\AL,\Lincom} \leq \theta_\Lincom \leq \widehat{\theta}_\Lincom + c_\alpha \widehat{\Scale}_{\AL,\Lincom} \right] = 1-\alpha + \frac{2\delta_\Lincom^2}{nh^{d+2}\sigma_\Lincom^2} \phi(c_\alpha) c_\alpha + o(r_n),
\end{equation*}
whereas setting $\widehat{\Scale}_\Lincom = \widehat{\Scale}_{\SB,\Lincom}$ and applying Theorem \ref{thm:EE-student} gives
\begin{equation*}
    \P\left[\widehat{\theta}_\Lincom - c_\alpha \widehat{\Scale}_{\SB,\Lincom} \leq \theta_\Lincom \leq \widehat{\theta}_\Lincom + c_\alpha \widehat{\Scale}_{\SB,\Lincom} \right] = 1-\alpha + o(r_n).
\end{equation*}
In other words, confidence intervals based on \eqref{eq:T-stat-SB} are demonstrably superior to those based on \eqref{eq:T-stat-AL} from a higher-order asymptotic point of view. This finding provides a theoretical explanation of the simulation evidence reported in \citet{Cattaneo-Crump-Jansson_2014a_ET,Cattaneo-Crump-Jansson_2014b_ET,Cattaneo-Crump-Jansson_2010_JASA}, where feasible confidence intervals based on small bandwidth asymptotics were shown to offer better finite sample performance in terms of coverage error than their counterparts based on classical asymptotic linear approximations.

\subsection{Discussion and Bootstrap-Based Inference}

In the previous subsections, we investigated the higher-order performance of large sample distributional approximations under two alternative asymptotic frameworks: asymptotic linearity and small bandwidth asymptotics. We found that the choice of studentization matters in terms of distributional approximation errors, even under conditions guaranteeing that asymptotic linearity holds ($nh^{d+2}\to\infty$), in which case both asymptotic frameworks are first-order valid. As a consequence, our Edgeworth expansions (reported in Theorems \ref{thm:EE-standard} and \ref{thm:EE-student}) provide alternative validation of the main conclusions obtained by \citet{Cattaneo-Crump-Jansson_2010_JASA,Cattaneo-Crump-Jansson_2014a_ET} using first-order distributional approximations: in terms of distributional approximation accuracy, the small bandwidth framework justifying \eqref{eq:SB-Distribution} and \eqref{eq:T-stat-SB} dominates the asymptotic linear framework justifying \eqref{eq:AL-Distribution} and \eqref{eq:T-stat-AL}.

It is natural to ask whether a similar ranking emerges when employing bootstrap-based inference procedures. \cite{Cattaneo-Crump-Jansson_2014b_ET} studied the first-order properties of the nonparametric bootstrap under small bandwidth asymptotics for the kernel-based DWAD estimator, and showed that a similar first-order pattern emerges in that case: Under slightly stronger assumptions than Assumptions \ref{A:DGP} and \ref{A:kernel}, they showed that if $(nh^{d+2} \land 1) nh^{2(P \land S) }\to 0$ and if $n^2 h^d \to \infty$, then
\begin{equation*}
    \V^*[\widehat{\theta}^*]^{-1/2}(\widehat{\theta}^*-\widehat{\theta})\rightsquigarrow_\P\m{N}(0,I),
\end{equation*}
where $\rightsquigarrow_\P$ denotes weak convergence in probability,
\begin{equation*}
    \V^*[\widehat{\theta}^*] = \V^*[\bar{L}^*] + \V^*[\bar{Q}^*], \qquad \V^*[\bar{L}^*] = n^{-1} \left[ \Sigma + o_\P(1) \right], \qquad \V^*[\bar{Q}^*] = 3\binom{n}{2}^{-1} h^{-d-2} \left[ \Delta + o_\P(1) \right],
\end{equation*}
and
\begin{equation*}
    n^{-1} \widehat{\Sigma}^* = n^{-1} [\Sigma + o_\P(1) ] + 4\binom{n}{2}^{-1} h^{-d-2}[\Delta+o_\P(1)], \qquad \widehat{\Delta}^* = \Delta + o_\P(1),
\end{equation*}
with $\widehat{\theta}^*$, $\bar{L}^*$, $\bar{Q}^*$, $\widehat{\Sigma}^*$ and $\widehat{\Delta}^*$ denoting nonparametric bootstrap analogs of $\widehat{\theta}$, $\bar{L}$, $\bar{Q}$, $\widehat{\Sigma}$ and $\widehat{\Delta}$, respectively, and $\V^*[\cdot]$ denoting the variance computed conditional on the original data. It follows from those results that under small bandwidth asymptotics the bootstrap consistently estimates the distribution of a studentized version of $\widehat{\theta}$ when $\widehat{V}_\SB$ is used for studentization purposes, but not when $\widehat{V}_\AL$ is. These conclusions are in perfect agreement with those discussed in Section \ref{sec: First-Order Distribution Theory}, the only notable difference in the details being that the variability induced by the bootstrap is larger outside the asymptotic linear regime because $\V^*[\bar{Q}^*]/\V[\bar{Q}]=3+o_\P(1)$. Furthermore, the second term of the bootstrap-based jackknife variance estimator $\widehat{\Sigma}^*$ is asymptotically doubled relative to the second term of the jackknife variance estimator $\widehat{\Sigma}$.

\citet{Nishiyama-Robinson_2005_ECMA} used Edgeworth expansions to study the properties of inference procedures based on the nonparametric bootstrap under assumptions implying asymptotic linearity. Based on the findings in this paper and those in \cite{Cattaneo-Crump-Jansson_2014b_ET}, we conjecture that analogous conclusions to those obtained herein will be valid for the case of bootstrap-based inference. While the conceptual parallelism between bootstrap-based inference and the results reported in this paper are clear, formalizing our conjecture requires substantial additional technical work due to the added complications associated with the data resampling, and hence we leave the theoretical analysis for future work.

\section{Conclusion}\label{sec: Conclusion}

Employing Edgeworth expansions, we compared the higher-order properties of two first-order distributional approximations and their associated confidence intervals for the kernel-based DWAD estimator of \citet{Powell-Stock-Stoker_1989_ECMA}. We showed that small bandwidth asymptotics not only give demonstrably better distributional approximations than those implied by asymptotic linearity, but also justifies employing a variance estimator for studentization purposes that improves the distributional approximation. The main takeaway from our results is that in two-step semiparametric settings, and related problems, alternative asymptotic approximations that capture higher-order terms, which are ignored by more traditional asymptotic linearity-based approximations, can deliver better distributional approximations and, by implication, more accurate inference procedures in finite samples. See \citet{Cattaneo-Jansson-Newey_2018_ET} for related discussion.

While beyond the scope of this paper, it would be of interest to develop analogous Edgeworth expansions for more general linear and non-linear two-step semiparametric procedures employing either Gaussian or resampling approximations under both conventional and alternative asymptotic frameworks \citep{Cattaneo-Crump-Jansson_2013_JASA,Cattaneo-Jansson_2018_ECMA,Cattaneo-Jansson-Ma_2019_RESTUD,Cattaneo-Jansson_2022_ET}. In particular, for the special case of kernel-based DWAD estimators, which is a linear two-step kernel-based semiparametric estimator, \citet{Nishiyama-Robinson_2005_ECMA} already obtained Edgeworth expansions for bootstrap-based inference procedures under asymptotic linearity that could be contrasted with those obtained under small bandwidth asymptotics \citep{Cattaneo-Crump-Jansson_2014b_ET}, after establishing more general Edgeworth expansions accounting for the bootstrap.

%\newpage
\appendix

\numberwithin{equation}{section}
\numberwithin{assumption}{section}
\numberwithin{lem}{section}
\numberwithin{thm}{section}
\numberwithin{coro}{section}

\section{Second-Order U-Statistics}\label{App A: Second-Order U-Statistics}

Let $U_n$ be a second-order U-statistic with $n$-varying kernel:
\begin{equation*}
    U_n :=\binom{n}{2}^{-1}\sum_{1\leq i< j\leq n} u_n(Z_i,Z_j),
\end{equation*}
where $Z_1,Z_2,\dots$ are $i.i.d.$ copies of a random vector $Z$ and where, for each $n \geq 2$, $u_n$ is a permutation symmetric $\R$-valued function. The main result of this section concerns the distribution of the (approximately) standardized statistic $(U_n-\Center_n)/\Scale_n$, where $\Center_n \in \R$ and $\Scale_n > 0$ are non-random. To be specific, dropping the subscript $n$ to simplify notation and defining
\begin{equation*}
    F(x):=\P\left[\frac{U-\Center}{\Scale}\leq x\right],\qquad x\in\R,
\end{equation*}
our objective is to obtain a valid Edgeworth expansion for $F$.

When stating and proving the result, it is useful to employ the Hoeffding decomposition
\begin{equation*}
    U-\Center = B + L + Q,
\end{equation*}
where
\begin{equation*}
    B := \E[U]-\Center, \qquad L := n^{-1} \sum_{i=1}^n\ell_i, \qquad Q := \binom{n}{2}^{-1}\sum_{1\leq i< j\leq n} q_{ij},
\end{equation*}
and where, for $1 \leq i < j \leq n$,
\begin{equation*}
    \ell_i := 2(\E[u(Z_i,Z_j)|Z_i] - \E[u(Z_i,Z_j)])
\end{equation*}
and
\begin{equation*}
    q_{ij} := u(Z_i,Z_i) - \frac{1}{2}(\ell(Z_i) + \ell(Z_j)) - \E[u(Z_i,Z_j)].
\end{equation*}
Also, it is useful to define $\sigma_{\ell}^2:=\E[\ell^2_1]$, $\sigma_{q}^2:=\E[q^2_{12}]$, $\varkappa_1:= \E[\ell_1^3]$, and $\varkappa_2:=\E[\ell_1\ell_2q_{12}]$, and to note that
\begin{equation*}
    \omega^2 := \V[U] = \V[L] + \V[Q] = n^{-1} \sigma_{\ell}^2 + \binom{n}{2}^{-1}\sigma_{q}^2 = n^{-1} \sigma_{\ell}^2 \left[1 + O(n^{-1} \sigma_q^2/\sigma_{\ell}^2)\right].
\end{equation*}

\begin{thm}\label{thm:EE-ustat}
    For some $p \in (2,3]$, let the following conditions hold:
    \begin{enumerate}[(a)]
        \item $\E\left[|\ell_1/\sigma_\ell|^3\right] =  O(1)$ and $\E[|q_{12}/\sigma_\ell|^p] < \infty$;
        \item $n^{-1}\sigma_\ell^2/\omega^2 \to 1$ and $\omega^2/\Scale^2 \to 1$;
        \item For every $\underline{c},\bar{c} > 0$,
        \begin{equation*}
            \limsup_{n \to \infty}\sup_{\underline{c} < |t| \leq \bar{c} \log n}|\E[\exp(\iota t\ell_1/\sigma_\ell)]| < 1.
        \end{equation*}
    \end{enumerate}
    Then
    \begin{equation*}
        \sup_{x\in\R}\left|F(x) - G(x)\right| = O \left( \m{E} + \frac{1}{\sqrt{n} \log n} \right),
    \end{equation*}
    where $G$ is the distribution function with characteristic function
    \begin{equation*}
        \chi_{G}(t):= \exp\left(\iota t\gamma_1-\frac{t^2}{2}\right)\left[ 1 + \sum_{2 \leq j \leq 9} \left(\iota t\right)^j\gamma_j \right],
    \end{equation*}
    with
    \begin{align*}
        \gamma_1 &:= \frac{B}{\Scale},\quad \gamma_2 := \frac{\omega^2-\Scale^2}{2\Scale^2},\quad \gamma_3 := \frac{\varkappa_1+6\varkappa_2}{6n^2\Scale^3}, \\
        \gamma_4 &:= \frac{\omega^2-\Scale^2}{4\Scale^4}\binom{n}{2}^{-1} \sigma_q^2, \quad \gamma_5 := \frac{1}{12 n^2\Scale^5}\left[\binom{n}{2}^{-1}\varkappa_1\sigma_q^2 + 6\left(n^{-1}\sigma_\ell^2-\Scale^2\right)\varkappa_2\right], \\
        \gamma_6 &:= \frac{1}{6n^4\Scale^6}\left[\varkappa_1\varkappa_2 + 12\binom{n}{2}^{-2}\binom{n}{4} \varkappa_2^2\right],\quad \gamma_7 := 0,\\
        \gamma_8 &:= \frac{n^{-1}\sigma_\ell^2-\Scale^2}{4n^4\Scale^8}\binom{n}{2}^{-2}\binom{n}{4} \varkappa_2^2, \quad \gamma_9 := \frac{1}{12n^6\Scale^9}\binom{n}{2}^{-2}\binom{n}{4} \varkappa_1\varkappa_2^2,
    \end{align*}
    and where
    \begin{align*}
        \m{E} &:= \frac{1}{n\sigma_\ell^4}\E[|\ell_1^2\ell_2q_{12}|] + \frac{1}{n^{3/2}\sigma_\ell^5}\E[|\ell_1^2\ell_2^2q_{12}|] + \frac{1}{n^{3/2}\sigma_\ell^3}\E[|\ell_1 q_{12}^2|] + \frac{1}{n^{3/2}\sigma_\ell^5}\E[|\ell_1\ell_2\ell_3q_{13}q_{23}|] \\
        &\qquad + \frac{1}{n^{3/2}\sigma_\ell^7}|\varkappa_2|\E[|\ell_1^2\ell_2q_{12}|] + \frac{1}{n^2\sigma_\ell^8}|\varkappa_2|\E[|\ell_1^2\ell_2^2q_{12}|] + \frac{\log^p n}{n^p \sigma_\ell^p} \m{M}_p(\log n),
    \end{align*}
    with
    \begin{equation*}
        \m{M}_p(m) := \left( mn \E[q_{12}^2] \right)^{p/2} + m \E\left[(n \E [q_{12}^2|Z_1])^{p/2}\right] + m n  \E\left[|q_{12}|^p\right]
    \end{equation*}
\end{thm}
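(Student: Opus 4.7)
\textbf{Proof plan for Theorem \ref{thm:EE-ustat}.}

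My strategy is the classical Edgeworth route via Esseen's smoothing lemma: bound the Kolmogorov distance between $F$ and $G$ by
\begin{equation*}
    \sup_{x\in\R}|F(x)-G(x)| \ls \int_{|t|\leq T_n} \left| \frac{\chi_F(t) - \chi_G(t)}{t} \right| \d t + \frac{1}{T_n},
\end{equation*}
and choose $T_n \asymp \sqrt{n}\log n$ so that the boundary term contributes $O(1/(\sqrt{n}\log n))$. I would split the integration range into a small-$|t|$ region (say $|t| \leq C$), a moderate region ($C < |t| \leq c \log n$), and a large region ($c\log n < |t| \leq T_n$), and bound the integrand separately in each. Throughout, I use the Hoeffding decomposition $(U-\Center)/\Scale = B/\Scale + L/\Scale + Q/\Scale$ together with the conditional factorization
\begin{equation*}
    \chi_F(t) = e^{\iota t B/\Scale}\, \E\!\left[e^{\iota t L/\Scale}\, e^{\iota t Q/\Scale}\right].
\end{equation*}

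In the small-$|t|$ region I would Taylor expand $e^{\iota tQ/\Scale}$ to a high enough order (through powers of $Q^k$ for $k \leq 4$) and take the joint expectation with $e^{\iota tL/\Scale}$. The degeneracy $\E[q_{ij}\mid Z_i]=0$ collapses most cross-moments, so only a short list of combinations survives: the variance contribution $\E[Q^2]$, the cross moment $\E[\ell_1\ell_2 q_{12}]=\varkappa_2$, the degenerate fourth-moment pieces of $Q$, and standard cumulants $\sigma_\ell^2, \varkappa_1$ of $L$. For $\E[e^{\iota tL/\Scale}]$ I use the standard i.i.d.\ Edgeworth expansion for sums, valid here because $\E[|\ell_1/\sigma_\ell|^3]=O(1)$. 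Matching coefficients term by term in the product and factoring out $\exp(\iota t\gamma_1 - t^2/2)$ should reproduce exactly the polynomial $1 + \sum_{j\geq 2}(\iota t)^j \gamma_j$ in $\chi_G$; in particular, the combinatorial factors $\binom{n}{2}^{-2}\binom{n}{4}$ in $\gamma_6,\gamma_8,\gamma_9$ arise from counting disjoint index pairs in $\E[Q^2\cdot Q]$-type terms, and $\gamma_5,\gamma_4$ arise from the joint expansion of $L$'s skewness-correction with $\E[e^{\iota tL/\Scale}Q^2]$.

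The remainder bound $\m{E}$ is produced by tracking the Taylor remainders in the $Q$-expansion together with the residuals in the Edgeworth expansion of $L$. Concretely, truncating the expansion of $e^{\iota tQ/\Scale}$ and of $\chi_L(t/\Scale)$ at the relevant order leaves expectations such as $\E[|\ell_1^2\ell_2 q_{12}|]$, $\E[|\ell_1\ell_2\ell_3 q_{13}q_{23}|]$, $\E[|\ell_1 q_{12}^2|]$ (and their $\varkappa_2$-weighted cousins), each of which contributes exactly one of the listed summands in $\m{E}$. The last piece $n^{-p}\log^p n\,\m{M}_p(\log n)$ arises from a standard truncation of $q_{12}$ at level $\sim \sqrt{n}\sigma_\ell/\log n$, controlling the truncation tail via $\E[|q_{12}|^p]$ together with the conditional second moment $\E[q_{12}^2\mid Z_1]$; this is the source of the $\log^p n$ factor and is the analogue of the Callaert--Janssen--Veraverbeke truncation for non-degenerate U-statistics adapted here to the degenerate component.

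The main obstacle is the moderate and large $|t|$ regime, where Taylor expansion is no longer available and we must rely on condition (c). Here I would write $\E[e^{\iota tL/\Scale}e^{\iota tQ/\Scale}]$ as an iterated expectation, condition on $Z_2,\dots,Z_n$ (exploiting that $L$ factors as a product over $i$ and that $Q$ has a projection-orthogonal structure to $\ell_1$), and bound the inner integral by a power of $\sup_{\underline{c}<|s|\leq \bar c\log n}|\E[\exp(\iota s\ell_1/\sigma_\ell)]|$, raised to order $n$. Condition (c) then provides geometric decay of this quantity across the moderate range. Combined with a truncation of the $Q$ contribution and the a priori moment control from (a), this yields an exponentially small bound that is absorbed into the $1/T_n$ term. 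Executing this conditioning argument cleanly in the presence of the degenerate $Q$ is the most delicate part, but it follows the Bhattacharya--Rao/Götze template adapted to $n$-varying U-statistic kernels, and assumption (c) is precisely the Cramér-type hypothesis that makes it go through.
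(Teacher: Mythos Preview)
Your overall shape---smoothing inequality with $T_n=\sqrt{n}\log n$, then a split of the frequency range---matches the paper, but two concrete steps would fail as written. First, condition (c) is not available where you invoke it. It bounds $|\E[\exp(\iota s\ell_1/\sigma_\ell)]|$ only for $s$ bounded away from $0$, and since $\chi_\ell$ enters with argument $\tau=t/(n\Scale)$, i.e.\ $s=\tau\sigma_\ell\approx t/\sqrt{n}$, condition (c) is usable only once $|t|$ is of order $\sqrt{n}$ or larger. Your ``moderate'' region $C<|t|\leq c\log n$ (and the lower part of your ``large'' region) has $t/\sqrt{n}\to 0$, so condition (c) yields nothing there. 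The paper's split is $|t|\leq\log n$ (Taylor expansion), $\log n<|t|\leq c\sqrt{n}$ (the third-moment bound $|\chi_\ell(\tau)|\leq 1-t^2/(3n)\leq\exp(-t^2/(3n))$, which follows from assumption (a)), and only $c\sqrt{n}<|t|\leq\sqrt{n}\log n$ uses condition (c); all three regimes are needed.

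Second, your conditioning argument for the non-expansion range has a gap: conditioning on $Z_2,\dots,Z_n$ does not isolate $\chi_\ell(\tau)$, because $Q$ still depends on $Z_1$ through every $q_{1j}$, so you cannot pull out even one clean factor of $\chi_\ell$, let alone a power of order $n$. The paper's device is a holdout trick: write $Q=Q(m)+(Q-Q(m))$ with $Q(m):=\binom{n}{2}^{-1}\sum_{i\leq m,\,i<j\leq n}q_{ij}$, Taylor-expand only $\exp(\iota tQ(m)/\Scale)$ to order $2$, and use that $Q-Q(m)$ depends only on $Z_{m+1},\dots,Z_n$ to factor out $|\chi_\ell(\tau)|^{m-2k}$ (the $k$th-order term in $Q(m)$ ties up at most $2k$ of the first $m$ indices). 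Tuning $m\asymp n\log n/t^2$ in the medium range and $m\asymp\log n$ in the high range gives the required decay. The $\m{M}_p(\log n)$ term is then exactly the Gin\'e--Lata{\l}a--Zinn moment inequality applied to $\E[|Q(m)|^p]$ (and to $\E[|Q|^p]$ in the low-frequency range), not a truncation of $q_{12}$. Finally, in the low-frequency region the paper expands $e^{\iota tQ/\Scale}$ only to second order: the higher $\gamma_j$ (including $\gamma_6,\gamma_8,\gamma_9$) come from the $\binom{n}{4}$ piece of $\E[\exp(\iota tL/\Scale)Q^2]$ after further expanding the factors $\exp(\iota\tau(\ell_1+\ell_2))$, so expanding to $Q^4$ is neither needed nor supported by the moment assumption $p\leq 3$.
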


\begin{coro}\label{coro:EE-ustat}
    If the assumptions of Theorem \ref{thm:EE-ustat} hold and if $\gamma_1 \to 0$, then
    \begin{equation*}
        \sup_{x\in\R}\left|F(x) - \bar{G}(x)\right| = O \left( \gamma_1^2 + \m{E} + \frac{1}{\sqrt{n} \log n} \right),
    \end{equation*}
    where $\bar{G}$ is the distribution function with characteristic function
    \begin{equation*}
        \chi_{\bar{G}}(t):= \exp\left(-\frac{t^2}{2}\right)\left[ 1 + \sum_{1 \leq j \leq 9} \left(\iota t\right)^j\gamma_j \right], \qquad \gamma_1:=B.
    \end{equation*}
\end{coro}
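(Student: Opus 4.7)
The plan is to bootstrap off Theorem \ref{thm:EE-ustat} via the triangle inequality:
$\sup_x |F(x) - \bar{G}(x)| \leq \sup_x |F(x) - G(x)| + \sup_x |G(x) - \bar{G}(x)|$.
The first piece is $O(\m{E} + 1/(\sqrt{n}\log n))$ by Theorem \ref{thm:EE-ustat}, so the entire task reduces to showing that $\sup_x |G(x) - \bar{G}(x)|$ is $O(\gamma_1^2)$, up to contributions that are already absorbed by the existing bound. Since $G$ and $\bar{G}$ are formal Edgeworth signed distributions whose characteristic functions differ only in whether the bias correction $\gamma_1$ enters the exponent or the polynomial factor, a direct Fourier-level comparison is the natural route.

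Writing $p(\iota t) := \sum_{j=2}^{9} (\iota t)^j \gamma_j$ so that $\chi_G(t) = e^{\iota t \gamma_1 - t^2/2}(1 + p(\iota t))$ and $\chi_{\bar{G}}(t) = e^{-t^2/2}(1 + \iota t \gamma_1 + p(\iota t))$, a direct algebraic manipulation yields
\begin{equation*}
\chi_G(t) - \chi_{\bar{G}}(t) = e^{-t^2/2}\bigl[(e^{\iota t \gamma_1} - 1 - \iota t \gamma_1) + (e^{\iota t \gamma_1} - 1)\, p(\iota t)\bigr].
\end{equation*}
Applying the elementary estimates $|e^{\iota s} - 1 - \iota s| \leq s^2/2$ and $|e^{\iota s} - 1| \leq |s|$ at $s = t\gamma_1$, and then invoking Esseen's smoothing inequality (which applies since both $G$ and $\bar{G}$ have uniformly bounded densities by the integrability of their characteristic functions against $e^{-t^2/2}$ polynomial factors), one obtains
\begin{equation*}
\sup_x |G(x) - \bar{G}(x)| \lesssim \int_{\R} \frac{|\chi_G(t) - \chi_{\bar{G}}(t)|}{|t|}\, dt \lesssim \gamma_1^2 + |\gamma_1| \sum_{j=2}^{9} |\gamma_j|,
\end{equation*}
where the Gaussian moments $\int e^{-t^2/2}|t|^k\,dt$ are swept into the implicit constants.

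The hard part will be arguing that the cross-term residue $|\gamma_1| \sum_{j=2}^9 |\gamma_j|$ is itself dominated by $\gamma_1^2 + \m{E} + 1/(\sqrt{n}\log n)$. My plan is to use AM-GM, $|\gamma_1 \gamma_j| \leq \tfrac{1}{2}(\gamma_1^2 + \gamma_j^2)$, and then to verify from the explicit formulas for $\gamma_2,\ldots,\gamma_9$ in Theorem \ref{thm:EE-ustat} that $\sum_{j=2}^9 \gamma_j^2$ is controlled by $\m{E} + 1/(\sqrt{n}\log n)$ under the theorem's hypotheses: the assumptions $n^{-1}\sigma_\ell^2/\omega^2 \to 1$ and $\omega^2/\Scale^2 \to 1$ govern $\gamma_2, \gamma_4, \gamma_8$, while the cumulant-driven coefficients $\gamma_3, \gamma_5, \gamma_6, \gamma_9$ each carry at least one factor of $n^{-1/2}$ or better and are absorbed by the corresponding $\varkappa_1$- and $\varkappa_2$-terms already present in $\m{E}$. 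Combining this bound with Theorem \ref{thm:EE-ustat} through the triangle inequality delivers the claim.
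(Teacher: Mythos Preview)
The paper states the corollary without a separate proof, so your triangle-inequality strategy---apply Theorem \ref{thm:EE-ustat} for $\sup_x|F-G|$, then compare $G$ with $\bar G$ via their characteristic functions---is exactly the intended route, and your identity $\chi_G(t)-\chi_{\bar G}(t)=e^{-t^2/2}\big[(e^{\iota t\gamma_1}-1-\iota t\gamma_1)+(e^{\iota t\gamma_1}-1)\,p(\iota t)\big]$ is correct. The gap is in your handling of the cross term $|\gamma_1|\sum_{j\ge2}|\gamma_j|$. You propose AM--GM followed by the assertion $\gamma_j^2=O\big(\m E+1/(\sqrt n\log n)\big)$, but this bound does not follow from the stated hypotheses. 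For $j=2$, condition (b) of Theorem \ref{thm:EE-ustat} gives only $\gamma_2=(\omega^2/\Scale^2-1)/2\to0$ with \emph{no rate}; nothing prevents $\gamma_2^2$ from decaying more slowly than $\m E+1/(\sqrt n\log n)$. The $\varkappa_2$-driven piece of $\gamma_3$ is likewise of order $|\varkappa_2|/(\sqrt n\,\sigma_\ell^3)\lesssim \sigma_q/(\sqrt n\,\sigma_\ell)$, whose square $\sigma_q^2/(n\sigma_\ell^2)$ is $o(1)$ by (b) but is not in general $O(\m E)$: the $\m M_p$ contribution to $\m E$ is of order $\big(\sigma_q^2/(n\sigma_\ell^2)\big)^{p/2}$ up to logarithms, which for $p>2$ is \emph{smaller}, not larger, and the remaining terms in $\m E$ carry no automatic lower bound tied to $\sigma_q^2/(n\sigma_\ell^2)$. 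So the absorption you assert does not go through.

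In the paper's actual application (the proof of Theorem \ref{thm:EE-standard}) the issue is harmless, because there the target rate $r_n$ already dominates $|\gamma_1|$, whence $|\gamma_1|\sum_{j\ge2}|\gamma_j|=o(|\gamma_1|)=o(r_n)$ using only the fact (invoked in the paper when bounding $\m I_4$) that $\sum_{j\ge2}|\gamma_j|\to0$. At the stated level of generality of the corollary, however, your AM--GM route does not close the argument; you would need either an additional rate assumption on $\omega^2/\Scale^2-1$ or a different way of disposing of the cross term.
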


\begin{remark}
    For every $k \in \Z_+$, 
    \begin{equation*}
        \frac{1}{2\pi}\int_\R \exp\left( -\iota tx - \frac{t^2}{2}\right) (\iota t)^{k} \d t = \phi(x)H_k(x), \qquad k \in \Z_+,
    \end{equation*}
    where $H_k(x)$ is the $k$-th order Hermite polynomial (i.e., $H_0(k)=1$, $H_1(x) = x$, $H_2(x) = x^2-1$, and so on). Therefore, the characteristic function $\chi_{\bar{G}}$ from Corollary \ref{coro:EE-ustat} can be inverted to obtain the following closed form expression for $\bar{G}$:
    \begin{equation*}
        \bar{G}(x) = \Phi(x) - \phi(x)\left[\sum_{1 \leq j \leq 9} \gamma_j H_{j-1}(x)\right].
    \end{equation*}
\end{remark}

\begin{remark}
    Condition (c) of Theorem \ref{thm:EE-ustat} is implied by the following condition:
    \begin{enumerate}
        \item[(c$'$)] For every $\underline{c} > 0$,
        \begin{equation*}
            \limsup_{n \to \infty}\sup_{|t| > \underline{c}}|\E[\exp(\iota t\ell_1/\sigma_\ell)]| < 1.
        \end{equation*}
    \end{enumerate}
    Moreover, by the proposition following \citet[Lemma 1.4]{Petrov_1995_Book}, condition (c$'$) is in turn implied by the following condition:
    \begin{enumerate}
        \item[(c$''$)] $\limsup_{n,|t| \to \infty} \E[\exp(\iota t\ell_1/\sigma_\ell)]| < 1$.
    \end{enumerate}
    When $\ell_1$ does not depend on $n$, conditions (c) and (c$'$) are equivalent and it is customary to replace these conditions by (c$''$), which itself reduces to the familiar Cram\'er condition on $\ell_1$, namely
    \begin{equation}\label{Cramer condition}
        \limsup_{|t| \to \infty} \E[\exp(\iota t\ell_1)]| < 1.
    \end{equation}
    In contrast, when $\ell_1$ does depend on $n$, condition (c) is potentially easier to verify than (c$''$). 
    An example where this potential is realized is given by the DWAD estimator studied in Theorems \ref{thm:EE-standard} and \ref{thm:EE-student}. In that example, it appears difficult to formulate simple conditions under which (c$''$) holds, but we are able to verify (c) with the help of the following observation: By \eqref{E:exp_expansion}, (c) holds whenever there exists a fixed function $\lambda$ satisfying
    \begin{equation}\label{Approximate Cramer condition}
        \E[(\ell_1-\lambda(Z_1))^2] = o(1/\log^2 n) \qquad \text{and} \qquad \limsup_{|t| \to \infty} \E[\exp(\iota t \lambda(Z_1))]| < 1,
    \end{equation}
    a condition which is itself equivalent to \eqref{Cramer condition} when $\ell_1$ does not depend on $n$.

    To summarize, the condition \eqref{Approximate Cramer condition} is equivalent to (c$''$) when $\ell_1$ does not depend on $n$ and more generally it provides a simple sufficient condition for (c) when $\ell_1$ is mean square convergent. 
\end{remark}
    
\begin{remark}
    Suppose $u_n$ does not depend on $n$ and that
    \begin{equation*}
        \E[|\ell_1|^3] < \infty, \qquad \E[|q_{12}|^3] < \infty, \qquad \text{and} \qquad \limsup\limits_{|t|\to\infty}\left|\E[\exp(\iota t\ell_1)]\right|<1.
    \end{equation*}
    If $\Center=\E[u(Z_1,Z_2)]$ and if $\Scale^2 = n^{-1}\sigma_\ell^2$, then the assumptions of Corollary \ref{coro:EE-ustat} are satisfied with $\gamma_1=0$ and $\m{E}=O(n^{-1})$. Also, $\gamma_2=O(n^{-1})$ and $\gamma_j=O(n^{-1})$ for $4 \leq j \leq 9$, so
    \begin{equation*}
        \bar{G}(x) = \Phi(x) - \phi(x)\frac{\gamma_3}{6\sqrt{n}}(x^2-1) + O(n^{-1}), \qquad \gamma_3 = \frac{\varkappa_1+6\varkappa_2}{\sigma_\ell^3},
    \end{equation*}
    uniformly in $x \in \R$. In other words, we recover a variant of \citet[Theorem 1.2]{Bickel-Gotze-vanZwet_1986_AOS}. See also \citet{Jing-Wang_2003_AOS}.
\end{remark}

\begin{remark}
    If $\E\left[|\ell_1/\sigma_\ell|^3\right] =  O(1)$ and if $n^{-1}\sigma_q^2/\sigma_\ell^2 \to 0$, then the H\"older inequality implies
    \begin{align*}
        \m{E}
        &\ls \frac{1}{n}\sqrt[3]{\E[|q_{12}/\sigma_\ell|^3]} + \left(\frac{1}{n^{3/4}}\sqrt[3]{\E[|q_{12}/\sigma_\ell|^3]}\right)^2 \\
        &+ \left(\frac{\log^3 n}{n} \E\left[|q_{12}/\sigma_\ell|^2\right] \right)^{p/2} + \frac{\log^{1+p} n}{n^{p/2}}  \E\left[(\E [(q_{12}/\sigma_\ell)^2|Z_1])^{p/2}\right] + \frac{\log^{1+p} n}{n^{p-1}} \E\left[|q_{12}/\sigma_\ell|^p\right],
    \end{align*}
    where $a\ls b$ denotes $a\leq C b$ for some positive constant $C$.
    Thus, for $p = 3$ the majorant side of $\m{E}$ is $o(1)$ iff
    \begin{equation*}
        \E\left[|q_{12}/\sigma_\ell|^2\right] = o(n/\log^3 n), \qquad     \E\left[|q_{12}/\sigma_\ell|^3\right] = o(n^2/\log^4 n),
    \end{equation*}
    and if
    \begin{equation*}
        \E\left[(\E [(q_{12}/\sigma_\ell)^2|Z_1])^{3/2}\right] = o(n^{3/2}/\log^4 n).
    \end{equation*}
\end{remark}

\section{Proof of Theorem \ref{thm:EE-ustat}}

Before presenting the formal proof, we outline the main steps involved, and compare our proof strategy to the approach taken in \citet{Jing-Wang_2003_AOS} for second-order U-statistics with fixed kernels (i.e., $u_n$ not depending on $n$) and, more broadly, to the classical Edgeworth expansion theory for sums of independent random variables \citep[e.g.,][]{Bhattacharya-Rao_1976_Book,Hall_1992_Book}.

We start from the following bound on the Kolmogorov distance between $F$ and $G$:
\begin{equation*}
    \rho(F,G) := \sup_{t\in\R}\left|F(t) - G(t)\right| \ls \int_{|t| \leq \sqrt{n}\log n } \left |\frac{\chi_F(t)-\chi_G(t)}{t}\right|\d t + \frac{1}{\sqrt{n}\log n},
\end{equation*}
where $\chi_F$ is the characteristic function of $F$. The integral is then upper bounded over three different frequency domains: Low frequency (LF), $|t| \leq \log n$; medium frequency (MF), $\log n < |t| \leq c \sqrt{n}$ (for a judiciously chosen $c$); and high frequency (HF), $c \sqrt{n} < |t| \leq \sqrt{n}\log n$. In the case of MF and HF, we further use the bound $|\chi_F(t)-\chi_G(t)|\leq|\chi_F(t)|+|\chi_G(t)|$ (i.e., the triangle inequality) and deal with each term separately. At this level of generality the proof strategy is similar to the canonical case of first-order U-statistics \citep[e.g.,][]{Bhattacharya-Rao_1976_Book,Hall_1992_Book}. However, we need to proceed differently to control the influence of the quadratic term, for which a bound follows almost exclusively from the linear part. For instance, for HF, we use Assumption (c) to find a $b>0$ such that for large $n$ and for $|t| \in (c \sqrt{n}, \sqrt{n} \log n]$,
\begin{equation*}
    \left|\chi_\ell\left(\frac{t}{n \Scale}\right)\right|\leq 1-b<\exp(-b), \qquad \chi_\ell(t) := \E[\exp(\iota t\ell_1)].
\end{equation*}

For the most interesting part, the LF domain, we begin by approximating $\chi_F(t)$ as follows:
\begin{align*}
    \chi_F(t) &= \exp\left(\iota \frac{t}{\Scale} B\right) \E\left[\exp\left(\iota\frac{t}{\Scale}L\right)\exp\left(\iota \frac{t}{\Scale}Q\right)\right] \\
    &\approx \exp\left(\iota \frac{t}{\Scale} B\right) \E\left[\left(\iota\frac{t}{\Scale}L\right)\left(1 + \iota \frac{t}{\Scale}Q - \frac{t^2}{2\Scale^2}Q^2\right)\right],
\end{align*}
and then examine each term on the right hand side separately while controlling the approximation error using Lemma \ref{L:Q_Lp} and the bound
\begin{equation}\label{E:exp_expansion}
    \left|\exp(\iota x) - \sum_{0 \leq j \leq 2} \frac{(\iota x)^j}{j!} \right|\leq  |x|^p.
\end{equation}
When doing this, our proof departs from \citet{Jing-Wang_2003_AOS} because we do not know \emph{ex-ante} which term(s) are higher-order due to the possibly $n$-varying structure of the U-statistic kernel. Therefore, we keep track of all terms, with special attention to the contribution of the terms involving $Q$. Finally, we collect all (possibly) leading terms in the approximation to $\chi_F$ in $\chi_G$ and arrive at a result of the form
\begin{equation}\label{E:cf_approximation}
    |\chi_F(t)-\chi_G(t)| \ls  \exp\left(-\frac{t^2}{4}\right)\m{R}(t) +  \frac{|t|^p}{n^{3p/2} \Scale^p \log^{p/2} n} \m{M}_p(\log n),
\end{equation}
where
\begin{align*}
    \m{R}(t) := &\frac{t^4}{n^3 \Scale^4}\E[|\ell_1^2\ell_2q_{12}|] + \frac{|t|^5}{n^4 \Scale^5}\E[|\ell_1^2\ell_2^2q_{12}|] + \frac{|t|^3}{n^3 \Scale^3}\E[|\ell_1 q_{12}^2|] + \frac{|t|^5}{n^4 \Scale^5}\E [|\ell_1\ell_2\ell_3q_{13}q_{23}|] \\
    &+ \frac{|t|^7}{n^5 \Scale^7}|\varkappa_2|\E[|\ell_1^2\ell_2q_{12}|]+ \frac{t^8}{n^6 \Scale^8}|\varkappa_2|\E[|\ell_1^2\ell_2^2q_{12}|],
\end{align*}
and where $n^{-1}\sigma_\ell^2/\Scale^2 \to 1$.

\bigskip
\textbf{Technical Details}.
Letting $g$ denote the Lebesgue density of $G$, application of a ``smoothing inequality'' \citep[e.g., Theorem 5.1 of][]{Petrov_1995_Book} gives
\begin{equation*}
    \rho(F,G) \ls \int_{|t| \leq \upsilon} \left |\frac{\chi_F(t)-\chi_G(t)}{t}\right|\d t +\frac{\sup_{x\in\R}|g(x)|}{\upsilon},\qquad \upsilon>0.
\end{equation*}
Setting $v=\sqrt{n}\log n$ and using the fact that $g$ is bounded (because $\sum_{2 \leq j \leq 9} |\gamma_j| \to 0$), it follows from the triangle inequality that
\begin{equation}\label{E:inter_bond}
    \rho(F,G) \ls \m{I}_1 + \m{I}_2 + \m{I}_3 + \m{I}_4 +\frac{1}{\sqrt{n}\log n},
\end{equation}
where, with $c>0$ a constant to be chosen later,
\begin{align*}
    \m{I}_1&:=\int_{|t|\leq\log n}\left |\frac{\chi_F(t)-\chi_G(t)}{t}\right|\d t, \qquad \m{I}_2:=\int_{\log n<|t|\leq c \sqrt{n}}\left |\frac{\chi_F(t)}{t}\right|\d t, \\
    \m{I}_3&:=\int_{c \sqrt{n}<|t|\leq \sqrt{n}\log n}\left |\frac{\chi_F(t)}{t}\right|\d t, \qquad \m{I}_4:=\int_{|t|>\log n}\left |\frac{\chi_G(t)}{t}\right|\d t.
\end{align*}
In what follows, we bound each of these integrals in turn.

\bigskip
\textit{Bound for $\m{I}_1$}.
We start by approximating
\begin{equation*}
    \chi_{L+Q}\left(\frac{t}{\Scale}\right) := \E\left[\exp \left(\iota \frac{t}{\Scale}L\right) \exp \left(\iota \frac{t}{\Scale}Q\right)\right].
\end{equation*}
First, using \eqref{E:exp_expansion}, we have
\begin{equation*}
    \left|\chi_{L+Q}\left(\frac{t}{\Scale}\right) - \E\left[\exp \left(\iota \frac{t}{\Scale}L\right)\left(1 + \iota \frac{t}{\Scale}Q - \frac{t^2}{2\Scale^2}Q^2\right)\right]\right| \leq |\tau|^p n^p \E\left[|Q|^p\right], \qquad  \tau := \frac{t}{n \Scale}.
\end{equation*}
Also, since $\ell_1,\dots,\ell_n$ are $i.i.d.$,
\begin{equation*}
    \E\left[\exp  \left(\iota \frac{t}{\Scale}L\right)\right] = \E\left[\exp \left(\iota \tau \sum_{1 \leq i \leq n} \ell_i\right)\right] = \E\left[\prod_{1 \leq i \leq n} \exp \left(\iota \tau \ell_i\right)\right] = \chi_\ell(\tau)^n.
\end{equation*}
Finally, because $\ell_k$ is independent of $q_{ij}$ when $k\not\in\{i,j\}$,
\begin{align*}
    \binom{n}{2} \E\left[\exp \left(\iota \frac{t}{\Scale}L\right)Q\right] &= \sum_{1 \leq i < j \leq n} \E\left[ q_{ij} \prod_{1 \leq k \leq n} \exp \left( \iota \tau \ell_k \right)\right] \\
    &= \sum_{1 \leq i < j \leq n} \E\left[ \exp\left( \iota \tau (\ell_i+\ell_j) \right) q_{ij} \prod_{\substack{1 \leq k \leq n \\ k \not \in \{i,j\}}} \exp \left(\iota\tau\ell_k\right)\right] \\
    &= \binom{n}{2} \chi_\ell(\tau)^{n-2}\E\left[\exp\left(\iota\tau(\ell_1+\ell_2)\right)q_{12}\right]
\end{align*}
and
\begin{align*}
    &\binom{n}{2}^2 \E\left[\exp \left(\iota \frac{t}{\Scale}L\right)Q^2\right] \\
    &= \sum_{1 \leq i < j \leq n}\E\left[q_{ij}^2\prod_{1 \leq m \leq n} \exp \left(\iota\tau\ell_m\right)\right] + \sum_{1 \leq i < j < l \leq n}\E\left[q_{ij}q_{jl}\prod_{1 \leq m \leq n} \exp \left(\iota\tau\ell_m\right)\right] \\
    &\qquad + \sum_{1 \leq i < j < k < l \leq n}\E\left[q_{ij}q_{kl}\prod_{1 \leq m \leq n} \exp \left(\iota\tau\ell_m\right)\right] \\
    &= \binom{n}{2}\chi_\ell(\tau)^{n-2}\E\left[\exp\left(\iota\tau(\ell_1 +\ell_2)\right)q_{12}^2\right] +\binom{n}{3}\chi_\ell(\tau)^{n-3}\E\left[\exp\left(\iota\tau(\ell_1 +\ell_2+\ell_3)\right)q_{12}q_{23}\right] \\
    &\qquad +\binom{n}{4}\chi_\ell(\tau)^{n-4}\left(\E\left[\exp\left(\iota\tau(\ell_1 +\ell_2)\right)q_{12}\right]\right)^2.
\end{align*}
Using the four preceding displays, we therefore have, uniformly in $|t| \leq \log n$,
\begin{align}\label{E:second_expansion}
    \chi_{L+Q}\left(\frac{t}{\Scale}\right) &= \chi_\ell(\tau)^n \nonumber \\
    &\quad + \chi_\ell(\tau)^{n-2} \left[\iota \tau n\E\left[ \exp(\iota \tau(\ell_1+\ell_2))q_{12}\right] - \frac{\tau^2}{2} n^2\binom{n}{2}^{-1}\E\left[\exp(\iota \tau(\ell_1+\ell_2))q_{12}^ 2\right]\right] \nonumber \\
    &\quad - \chi_\ell(\tau)^{n-3}\frac{\tau^2}{2}n^2\binom{n}{2}^{-2}\binom{n}{3}\E\left[\exp(\iota \tau(\ell_1+\ell_2+\ell_3))q_{13}q_{23}\right] \nonumber\\
    &\quad - \chi_\ell(\tau)^{n-4}\frac{\tau^2}{2}n^2\binom{n}{2}^{-2}\binom{n}{4} \left(\E\left[ \exp(\iota \tau(\ell_1+\ell_2))q_{12}\right]\right)^2 \nonumber\\
    &\quad  + |\tau|^p O\left(n^p \E\left[|Q|^p\right] \right).
\end{align}

Next, using degeneracy of $q_{ij}$ and \eqref{E:exp_expansion}, we have
\begin{align*}
    \E\left[\exp(\iota \tau(\ell_1+\ell_2))q_{12}\right]
    &= - \tau^2 \E\left[\ell_1\ell_2q_{12}\right]  \\
    &\quad + \iota \tau \E\left[\ell_1\left(\exp(\iota \tau\ell_2)-1-\iota \tau\ell_2\right)q_{12} + \ell_2\left(\exp(\iota \tau\ell_1)-1-\iota \tau\ell_1\right)q_{12}\right] \\
    &\quad + \E\left[\left(\exp(\iota \tau\ell_1)-1-\iota \tau\ell_1\right)\left(\exp(\iota \tau\ell_2)-1-\iota \tau\ell_2\right)q_{12}\right] \\
    & = - \tau^2\varkappa_2 + |\tau|^3 O\left(\E\left[|\ell_1^2\ell_2q_{12}|\right]\right) + \tau^4 O\left(\E\left[|\ell_1^2\ell_2^2q_{12}|\right]\right).
\end{align*}
Similarly,
\begin{align*}
    \E\left[\exp(\iota \tau(\ell_1+\ell_2))q_{12}^ 2\right] &= \E\left[q_{12}^ 2\right] +  \E \left[\left(\exp(\iota \tau(\ell_1+\ell_2))-1\right)q_{12}^2\right] \\
    &= \sigma_q^2 + |\tau| O\left(\E\left[|\ell_1q_{12}^2|\right]\right)
\end{align*}
and
\begin{align*}
    \E\left[\exp(\iota \tau(\ell_1+\ell_2+\ell_3))q_{13}q_{23}\right] &= \E\left[\left( \prod_{1 \leq i \leq 3} \left(\exp(\iota \tau \ell_i)-1\right)\right)q_{13}q_{23}\right] \\
    &= |\tau|^3 O\left(\E\left[|\ell_1\ell_2\ell_3q_{13}q_{23}|\right]\right).
\end{align*}
Also, using arguments familiar from the Edgeworth expansion theory for sum of $i.i.d.$ random variables \citep[e.g.,][]{Bhattacharya-Rao_1976_Book,Hall_1992_Book}, we have, for $k \in \{0,2,3,4\}$,
\begin{align*}\label{E:iid_EE}
    \chi_\ell(\tau)^{n-k} &= \exp\left(-\frac{t^2}{2}\right)\left[1 - \left(\frac{n^{-1}\sigma_\ell^2 - \Scale^2}{\Scale^2}\right)\frac{t^2}{2} - \iota \frac{\varkappa_1}{n^2 \Scale^3} \frac{t^3}{6}+ \left(\frac{n^{-1}\sigma_\ell^2 - \Scale^2}{\Scale^2}\right)^2 O(t^4) \right]  \\
    &\qquad + \exp\left(-\frac{t^2}{4}\right) o\left( \frac{|t|^3}{\sqrt{n}} + \frac{t^6}{\sqrt{n}} \right).
\end{align*}
Finally, using Lemma \ref{L:Q_Lp}, we have
\begin{equation*}
    \E\left[|Q|^p\right] \ls \frac{\m{M}_p(n)}{n^{2p}}  \ls \frac{\m{M}_p(\log n)}{n^{3p/2} \log^{p/2} n}.
\end{equation*}

Plugging the displays from the previous paragraph into \eqref{E:second_expansion}, we obtain \eqref{E:cf_approximation} and   therefore
\begin{align*}
    \m{I}_1 &\ls \int_{|t|\leq\log n}\exp\left(-\frac{t^2}{4}\right)|t|^{-1}\m{R}(t)\d t + \frac{\m{M}_p(\log n)}{n^{3p/2} \Scale^p \log^{p/2} n } \int_{|t|\leq\log n} |t|^{p-1}\d t \\
    &\ls \m{R}(1) + \frac{\log^{p/2} n}{n^{3p/2} \Scale^p} \m{M}_p(\log n) \\
    &\ls \frac{1}{n\sigma_\ell^4}\E[|\ell_1^2\ell_2q_{12}|] + \frac{1}{n^{3/2}\sigma_\ell^5}\E[|\ell_1^2\ell_2^2q_{12}|] + \frac{1}{n^{3/2}\sigma_\ell^3}\E[|\ell_1 q_{12}^2|] + \frac{1}{n^{3/2}\sigma_\ell^5}\E[|\ell_1\ell_2\ell_3q_{13}q_{23}|] \\
    &\qquad + \frac{1}{n^{3/2}\sigma_\ell^7}|\varkappa_2|\E[|\ell_1^2\ell_2q_{12}|] + \frac{1}{n^2\sigma_\ell^8}|\varkappa_2|\E[|\ell_1^2\ell_2^2q_{12}|] + \frac{\log^{p/2} n}{n^p \sigma_\ell^p} \m{M}_p(\log n),
\end{align*}
where the last $\ls$ uses $n^{-1}\sigma_\ell^2/\Scale^2 \to 1$.

\bigskip
\textit{Bound for $\m{I}_2$}.
For $1\leq m <n$, defining
\begin{equation*}
    Q(m) := \binom{n}{2}^{-1} \sum_{\substack{1 \leq i < j \leq n \\ i \leq m}} q_{ij}
\end{equation*}
and using \eqref{E:exp_expansion}, we have
\begin{align*}
    |\chi_F(t)| = \left|\chi_{L+Q}\left(\frac{t}{\Scale}\right)\right| \leq \left| \E\left[\exp\left(\iota \frac{t}{\Scale}(L+Q-Q(m))\right)\sum_{0 \leq k \leq 2}\frac{(\iota t)^k}{k!\Scale^k}Q(m)^k\right]\right| + \frac{|t|^p}{\Scale^p} \E\left[|Q(m)|^p\right],
\end{align*}
where, using the fact that $Q - Q(m)$ is a function of $X_{m+1},\dots, X_n$, it can be shown that 
\begin{align*}
    \left| \E\left[\exp\left(\iota \frac{t}{\Scale}(L+Q-Q(m))\right)Q(m)^k\right]\right| \ls \left(\frac{m}{n}\right)^k |\chi_\ell(\tau)|^{m-2k} \E[|q_{12}|^k], \qquad k\in\{0,1,2\},
\end{align*}
and therefore, using $n^{-1}\sigma_\ell^2/\Scale^2 \to 1$,
\begin{equation}\label{E:large_deviation_bound}
    |\chi_F(t)| \ls \sum_{0 \leq k \leq 2} \left(\frac{|t|m}{\sqrt{n}}\right)^k |\chi_\ell(\tau)|^{m-2k} \E\left[\left|\frac{q_{12}}{\sigma_\ell}\right|^k\right] + \frac{|t|^p}{\Scale^p} \E[|Q(m)|^p].
\end{equation}

Next, because $\E\left[|\ell_1/\sigma_\ell|^3\right] =  O(1)$ and $n^{-1}\sigma_\ell^2/\Scale^2 \to 1$, there exists a $c>0$ such that, for $n$ sufficiently large,
\begin{equation*}
    |\chi_\ell(\tau)| \leq 1 - \frac{t^2}{3n} \leq \exp\left(-\frac{t^2}{3n}\right), \qquad t \leq c \sqrt{n}.
\end{equation*}
Setting $m = \lfloor 15 n\log n/t^2\rfloor$ in \eqref{E:large_deviation_bound}, where $\lfloor \cdot \rfloor$ denotes the floor operator, and using the preceding display, we obtain (for $n$ sufficiently large)
\begin{equation*}
    |\chi_F(t)| \ls \sum_{0 \leq k \leq 2} \frac{|t|^k}{n^{5-k}} \E\left[\left|\frac{q_{12}}{\sigma_\ell}\right|^k\right] + \frac{|t|^p}{\Scale^p} \E[|Q(m)|^p], \qquad \log n <|t|\leq c\sqrt{n},
\end{equation*}
where, using Lemma \ref{L:Q_Lp} and $n^{-1}\sigma_\ell^2/\Scale^2 \to 1$,
\begin{align*}
    \frac{|t|^p}{\Scale^p} \E[|Q(m)|^p] &\ls \frac{|t|^p}{n^{2p} \Scale^p } \m{M}_p(m) \\
    &\ls \frac{\log^{p/2} n}{n^{p/2} \sigma_\ell^p} \sigma_q^p + |t|^{p-2} \frac{\log n}{n^{p-1} \sigma_\ell^p} \left[ \E\left[(\E [q_{12}^2|Z_1])^{p/2}\right] + \frac{\E\left[|q_{12}|^p\right]}{n^{p/2-1}}  \right], \qquad \log n <|t|\leq c\sqrt{n}.
\end{align*}

As a consequence, using $n^{-1}\sigma_q^2/\sigma_\ell^2 \to 0$,
\begin{align*}
    \m{I}_2 &\ls \sum_{0 \leq k \leq 2} \frac{1}{n^{5-3k/2}} \E\left[\left|\frac{q_{12}}{\sigma_\ell}\right|^k\right] + \frac{\log^{1+p/2} n}{n^{p/2} \sigma_\ell^p} \sigma_q^p + \frac{\log n}{n^{p/2} \sigma_\ell^p} \E\left[(\E [q_{12}^2|Z_1])^{p/2}\right] + \frac{\log n}{n^{p-1} \sigma_\ell^p} \E\left[|q_{12}|^p\right] \\
    &= o(n^{-1}) +  o\left(\frac{\log^p n}{n^p \sigma_\ell^p} \m{M}_p(\log n)\right).
\end{align*} 

\bigskip
\textit{Bound for $\m{I}_3$}. By condition (c), there exists $b > 0$ such that, for $n$ sufficiently large,
\begin{equation*}
    |\chi_\ell(\tau)|\leq 1-b \leq \exp(-b), \qquad c \sqrt{n} < |t| < \sqrt{n} \log n.
\end{equation*}
Setting $m = \lfloor 4\log n/b\rfloor$ in \eqref{E:large_deviation_bound} and using the preceding display, we obtain (for $n$ sufficiently large)
\begin{equation*}
    |\chi_F(t)| \ls \sum_{0 \leq k \leq 2} \frac{|t|^k\log^k n}{n^{4+k/2}} \E\left[\left|\frac{q_{12}}{\sigma_\ell}\right|^k\right] + \frac{|t|^p}{\Scale^p} \E[|Q(m)|^p], \qquad \qquad c \sqrt{n} < |t| < \sqrt{n} \log n,
\end{equation*}
where, using Lemma \ref{L:Q_Lp} and $n^{-1}\sigma_\ell^2/\Scale^2 \to 1$,
\begin{align*}
    \frac{|t|^p}{\Scale^p} \E[|Q(m)|^p] &\ls \frac{|t|^p}{n^{2p} \Scale^p } \m{M}_p(m) \ls \frac{|t|^p}{n^{3p/2} \sigma_\ell^p} \m{M}_p(\log n), \qquad \qquad c \sqrt{n} < |t| < \sqrt{n} \log n.
\end{align*}

As a consequence, using $n^{-1}\sigma_q^2/\sigma_\ell^2 \to 0$,
\begin{align*}
    \m{I}_3 &= o(n^{-2}) + O\left(\frac{\m{M}_p(\log n)}{n^{3p/2} \sigma_\ell^p} \int_{c\sqrt{n}\leq |t|\leq \sqrt{n}\log n} |t|^{p-1} dt\right) \\
    &=  o(n^{-2}) + O\left(\frac{\log^p n}{n^p \sigma_\ell^p} \m{M}_p(\log n) \right).
\end{align*} 

\bigskip
\textit{Bound for $\m{I}_4$}.
For every $j$, we have
\begin{equation*}
    \int_{t>\log n}t^{j-1}\exp\left(-\frac{t^2}{2}\right) \d t \ls \int_{t>\log n}\exp\left(-\frac{t^2}{4}\right) \d t = o(n^{-1}),
\end{equation*}
and therefore, using $\sum_{2 \leq j \leq 9} |\gamma_j| \to 0$,
\begin{align*}
    \m{I}_4&\ls \int_{|t|>\log n}|t|^{-1}\exp\left(-\frac{t^2}{2}\right)\left|1 +\sum_{j=2}^9\left(\iota t\right)^j\gamma_j \right|\d t \\
    &\ls \left(1 + \sum_{j=2}^9 |\gamma_j|\right) \int_{t>\log n}t^{-1}\exp\left(-\frac{t^2}{4}\right)\d t = o(n^{-1}).
\end{align*}
\qed

\section{Proof of Theorem \ref{thm:EE-standard}}\label{App B: Proof of Theorem EE-standard}

We employ Corollary \ref{coro:EE-ustat} with $u(Z_i,Z_j) = \Lincom'U_{ij}$ and $p=3$. Proceeding as in \citet{Cattaneo-Crump-Jansson_2010_JASA,Cattaneo-Crump-Jansson_2014a_ET,Cattaneo-Crump-Jansson_2014b_ET}, condition (a) of Theorem \ref{thm:EE-ustat} can be verified by direct calculations. Also, condition (b) of Theorem \ref{thm:EE-ustat} holds because
\begin{equation*}
    \sigma_\ell^2 = \sigma_\Lincom^2 + o(1), \qquad \sigma_q^2 = \frac{\delta_\Lincom^2 + o(1)}{h^{d+2}}, \qquad \text{and} \qquad nh^{d+2} \to \infty,
\end{equation*}
while condition (c) of Theorem \ref{thm:EE-ustat} holds because \eqref{Approximate Cramer condition} is satisfied with $\lambda = \psi_\Lincom$. The additional condition $\gamma_1 \to 0$ of Corollary \ref{coro:EE-ustat} holds if $nh^{2P}\to 0$ because it follows from routine (bias) calculations that $\E[\widehat{\theta}_\Lincom] - \theta_\Lincom = h^P\beta_\Lincom + o(h^P)$.

Next, the law of iterated expectations, integration by parts, and Taylor series expansions can be used to show that
\begin{equation*}
    \varkappa_1 = \kappa_{1,\Lincom} + O(h^P), \qquad \text{and} \qquad \varkappa_2 = \kappa_{2,\Lincom} + O(h^P),
\end{equation*}
and also that $\gamma_4 \ls n^{-3}h^{-d-2}$, $\gamma_5 \ls n^{-2}$, $\gamma_6 \ls n^{-1}$, $\gamma_8 \ls n^{-3}$, and $\gamma_9 \ls n^{-7/2}$.

Finally, by \citet[Supplemental Appendix]{Cattaneo-Crump-Jansson_2014b_ET}, we have
\begin{equation*}
    \E [q_{12}^2]^{3/2} \leq \E\left[(\E [q_{12}^2|Z_1])^{3/2}\right] \ls h^{-3d/2-3} \qquad \text{and} \qquad \E[|q_{12}|^3] \ls h^{-2d-3}.
\end{equation*}
Using these bounds and the H\"older inequality, we find that $\m{E} = o(n^{-1}h^{-d-2})$.
\qed

\section{Proof of Theorem \ref{thm:EE-student}}\label{App C: Proof of Theorem EE-student}

Letting $\ell_i := \Lincom'L_i$ and $q_{ij} := \Lincom'Q_{ij}$, occasionally suppressing the dependence on $\Lincom$, and using the Hoeffding decomposition of $\widehat{\theta}_\Lincom -\theta_\Lincom$ along with the identity
\begin{equation}\label{eq:linearization-Student}
    \frac{\Scale}{\widehat{\Scale}} = 1 - \frac{\widehat{\Scale}^2 - \Scale^2}{2\Scale^2} + \frac{(\widehat{\Scale}+2\Scale)(\widehat{\Scale}^2 - \Scale^2)^2}{2\Scale^2 \widehat{\Scale}(\widehat{\Scale} + \Scale)^2},
\end{equation}
we have
\begin{equation*}
    \frac{\widehat{\theta}_\Lincom -\theta_\Lincom}{\widehat{\Scale}_{\SB,\Lincom}} = \widetilde{T}_\SB + R_\SB,
\end{equation*}
where
\begin{equation*}
    \widetilde{T}_\SB := \frac{B + L + Q}{\Scale_{\SB,\Lincom}} - \frac{\widehat{\Scale}_{\SB,\Lincom}^2 - \Scale_{\SB,\Lincom}^2}{2\Scale_{\SB,\Lincom}^2} \frac{L + Q}{\Scale_{\SB,\Lincom}},
\end{equation*}
with $\Scale_{\SB,\Lincom}$ is a judiciously chosen positive scalar,
\begin{equation*}
    B := \E[\widehat{\theta}_\Lincom]-\theta_\Lincom, \qquad L := n^{-1} \sum_{1 \leq i \leq n} \ell_i, \qquad \text{and} \qquad Q := \binom{n}{2}^{-1}\sum_{1 \leq i < j \leq n} q_{ij},
\end{equation*}
and where
\begin{equation*}
    R_\SB := -  \frac{\widehat{\Scale}_{\SB,\Lincom}^2 - \Scale_{\SB,\Lincom}^2}{2\Scale_{\SB,\Lincom}^2} \frac{B}{\Scale_{\SB,\Lincom}} +  \frac{(\widehat{\Scale}_{\SB,\Lincom}+2\Scale_{\SB,\Lincom})(\widehat{\Scale}_{\SB,\Lincom}^2 - \Scale_{\SB,\Lincom}^2)^2}{2\Scale_{\SB,\Lincom}^2 \widehat{\Scale}_{\SB,\Lincom}(\widehat{\Scale}_{\SB,\Lincom} + \Scale_{\SB,\Lincom})^2} \frac{B + L + Q}{\Scale_{\SB,\Lincom}}
\end{equation*}
is a remainder term.

Defining
\begin{equation*}
    \widetilde{F}_\SB(x) := \P\big[\widetilde{T}_\SB \leq x \big]
\end{equation*}
and adapting the proof of Theorem \ref{thm:EE-ustat}, we obtain
\begin{equation}\label{eq:EE-Student}
    \rho(\widetilde{F}_\SB,\widehat{G}_\SB) = o(r_n)
\end{equation}
by applying a smoothing inequality followed by a split of the frequency domain of the resulting integral, where bounding the various integrals requires some additional care due to the presence of the variance estimator.

Also, using the strengthened moment condition $\E[Y^6] < \infty$, we obtain
\begin{equation}\label{eq:Remainder-Student}
    \P[|R_\SB| > r_n/\log n] = o(r_n),
\end{equation}
implying in turn that $\rho(\widehat{F}_\SB,\widetilde{F}_\SB) = o(r_n)$ and therefore also that
\begin{equation*}
    \rho(\widehat{F}_\SB,\widehat{G}_\SB) \leq \rho(\widehat{F}_\SB,\widetilde{F}_\SB) + \rho(\widetilde{F}_\SB,\widehat{G}_\SB) = o(r_n).
\end{equation*}

\textbf{Technical Details}. The identity \eqref{eq:linearization-Student} can be obtained as follows:
\begin{align*}
    \frac{\Scale}{\widehat{\Scale}} &= 1 - \frac{\widehat{\Scale} - \Scale}{\Scale} \frac{\widehat{\Scale} + \Scale}{\widehat{\Scale} + \Scale} + \frac{(\widehat{\Scale} - \Scale)^2}{\Scale\widehat{\Scale}} \frac{(\widehat{\Scale} + \Scale)^2}{(\widehat{\Scale} + \Scale)^2} \\
    %&= 1 - \frac{\widehat{\Scale}^2 - \Scale^2}{2\Scale^2} + \frac{\widehat{\Scale}^2 - \Scale^2}{2\Scale^2} - \frac{\widehat{\Scale}^2 - \Scale^2}{\Scale(\widehat{\Scale} + \Scale)} + \frac{(\widehat{\Scale}^2 - \Scale^2)^2}{\Scale\widehat{\Scale}(\widehat{\Scale} + \Scale)^2} \\
    &=  1 - \frac{\widehat{\Scale}^2 - \Scale^2}{2\Scale^2} + \frac{\widehat{\Scale} - \Scale}{2\Scale^2 (\widehat{\Scale} + \Scale)} (\widehat{\Scale}^2 - \Scale^2) + \frac{(\widehat{\Scale}^2 - \Scale^2)^2}{\Scale\widehat{\Scale}(\widehat{\Scale} + \Scale)^2} \\
    %&= 1 - \frac{\widehat{\Scale}^2 - \Scale^2}{2\Scale^2} + \frac{(\widehat{\Scale}^2 - \Scale^2)^2}{2\Scale^2 (\widehat{\Scale} + \Scale)^2} + \frac{(\widehat{\Scale}^2 - \Scale^2)^2}{\Scale\widehat{\Scale}(\widehat{\Scale} + \Scale)^2} \\
    &= 1 - \frac{\widehat{\Scale}^2 - \Scale^2}{2\Scale^2} + \frac{(\widehat{\Scale}+2\Scale)(\widehat{\Scale}^2 - \Scale^2)^2}{2\Scale^2 \widehat{\Scale}(\widehat{\Scale} + \Scale)^2}.
\end{align*}

Letting $u_{ij} := \Lincom'U_{ij}$ and defining
\begin{equation*}
    U := \binom{n}{2}^{-1}\sum_{1 \leq i < j \leq n} u_{ij} = \widehat{\theta}_\Lincom, \qquad W_1 := \binom{n}{2}^{-1}\sum_{1 \leq i < j \leq n} u_{ij}^2,
\end{equation*}
and
\begin{equation*}
    W_2 := \binom{n}{3}^{-1}\sum_{1 \leq i < j < k \leq n} \frac{u_{ij}u_{ik} + u_{ij}u_{jk} + u_{ik}u_{jk}}{3},
\end{equation*}
if follows from \citet[Supplemental Appendix]{Cattaneo-Crump-Jansson_2014b_ET} that
\begin{equation*}
    \widehat{\Scale}_{\SB,\Lincom}^2 = \binom{n}{2}^{-1} W_1 + \frac{4}{n}\frac{n-2}{n-1} W_2 - \frac{4}{n}U^2.
\end{equation*}
Also, for $k \in \{2,3\}$, using \citet[Supplemental Appendix]{Cattaneo-Crump-Jansson_2014b_ET} and Lemma \ref{L:Q_Lp},
\begin{align*}
    \E[|U - \E[u_{12}]|^{2k}] &\ls n^{-k} + n^{-2k}h^{-(2p-1)d-2k}, \\
    \E[|W_1 - \E[u_{12}^2]|^k] &\ls n^{-k/2}h^{-k(d+2)} + n^{-k}h^{-(2k-1)d-2p}, \\
    \E[|W_2 - \E[(\E[u_{12}|Z_1])^2]|^k] &\ls n^{-k/2} + n^{-k} h^{-(k-1)d-2p} + n^{-3k/2} h^{-2(k-1)d-2k},
\end{align*}
and therefore
\begin{align}\label{eq:LpError-VarianceEstimator}
    \E\left[|\widehat{\Scale}_{\SB,\Lincom}^2 - \Scale_{\SB,\Lincom}^2|^k \right] &\ls n^{-2k} \E[|W_1 - \E[u_{12}^2]|^k] + n^{-k} \E[|W_2 - \E[(\E[u_{12}|Z_1])^2]|^k] \nonumber \\
    &\qquad + n^{-k} \sqrt{\E[|U - \E[u_{12}]|^{2k}]} \nonumber \\
    &\ls n^{-3k/2} + n^{-2k}h^{-(k-1/2)d-2k} + n^{-5k/2} h^{-2(k-1)d-2k},
\end{align}
where
\begin{equation*}
    \Scale_{\SB,\Lincom}^2 := \binom{n}{2}^{-1} \E[u_{12}^2] + \frac{4}{n} \E[(\E[u_{12}|Z_1])^2] - \frac{4}{n} \E[u_{12}]^2.
\end{equation*}

To prove \eqref{eq:Remainder-Student}, it suffices to show that $\m{V}_1 + \m{V}_2 + \m{V}_3 = o(r_n)$, where
\begin{align*}
    \m{V}_1 &:= \P\left[\frac{(\widehat{\Scale}_{\SB,\Lincom}+2\Scale_{\SB,\Lincom})(\widehat{\Scale}_{\SB,\Lincom}^2 - \Scale_{\SB,\Lincom}^2)^2}{\Scale_{\SB,\Lincom}^2 \widehat{\Scale}_{\SB,\Lincom}(\widehat{\Scale}_{\SB,\Lincom} + \Scale_{\SB,\Lincom})^2} > \frac{r_n}{\log^2 n} \right], \\
    \m{V}_2 &:= \P\left[\frac{|\widehat{\theta}_\Lincom -\theta_\Lincom|}{\Scale_{\SB,\Lincom}} > \log n \right], \\
    \m{V}_3 &:= \P\left[\frac{| \widehat{\Scale}_{\SB,\Lincom}^2 - \Scale_{\SB,\Lincom}^2 |}{\Scale_{\SB,\Lincom}^2} \frac{|B|}{\Scale_{\SB,\Lincom}} > \frac{\sqrt{n}h^P}{\log n} \right].
\end{align*}
First, using \eqref{eq:LpError-VarianceEstimator} and the Markov inequality,
\begin{align*}
    \m{V}_1 &\leq \P\left[(\widehat{\Scale}_{\SB,\Lincom}^2 - \Scale_{\SB,\Lincom}^2)^2  > \frac{r_n \sigma_\Lincom^4}{n^2 \log^2 n} \right] + o(r_n) \ls \frac{n^3 \log^3 n}{r_n^{3/2}} \E\Big[ |\widehat{\Scale}_{\SB,\Lincom}^2 - \Scale_{\SB,\Lincom}^2|^3 \Big] + o(r_n) \\
    &\ls \frac{\log^3 n}{r_n^{3/2}} (n^{-3/2} + n^{-3}h^{-5d/2-6} + n^{-9/2} h^{-4d-6}) + o(r_n) = o(r_n).
\end{align*}
Next, using Theorem \ref{thm:EE-standard} and the properties of the standard normal distribution,
\begin{align*}
    \m{V}_2 &= 1 - \P\left[\frac{\widehat{\theta}_\Lincom -\theta_\Lincom}{\Scale_{\SB,\Lincom}} \leq \log n \right] + \P\left[\frac{\widehat{\theta}_\Lincom -\theta_\Lincom} {\Scale_{\SB,\Lincom}} < - \log n \right] \\
    &= 1 - \Phi(\log n) + \Phi(-\log n) + o(r_n) = o(r_n).
\end{align*}
Finally, using \eqref{eq:LpError-VarianceEstimator} and the Markov inequality,
\begin{align*}
    \m{V}_3 &\ls n \E\Big[ |\widehat{\Scale}_{\SB,\Lincom}^2 - \Scale_{\SB,\Lincom}^2|^2 \Big] \log^2 n  \\
    &\ls (n^{-2} + n^{-3}h^{-3d/2-4} + n^{-4} h^{-2d-4}) \log^2 n = o(r_n).
\end{align*}

Next, to prove \eqref{eq:EE-Student} we begin by using a ``smoothing inequality'' to obtain the bound 
\begin{equation*}
    \rho(\widetilde{F}_\SB,\widehat{G}_\SB) \ls \widehat{\m{I}}_1 + \widehat{\m{I}}_2 + \widehat{\m{I}}_3 + \widehat{\m{I}}_4 + \frac{1}{\sqrt{n}\log n},
\end{equation*}
where
\begin{align*}
    \widehat{\m{I}}_1 &:= \int_{|t| \leq \log n} \left| \frac{\chi_{\widetilde{F}_\SB}(t) - \chi_{\widehat{G}_\SB}(t)}{t} \right| \d t, \qquad \widehat{\m{I}}_2 :=\int_{\log n < |t| \leq c \sqrt{n}} \left| \frac{\chi_{\widetilde{F}_\SB}(t)}{t} \right| \d t, \\
    \widehat{\m{I}}_3 &:=\int_{c \sqrt{n} < |t| \leq \sqrt{n} \log n} \left|\frac{\chi_{\widetilde{F}_\SB}(t)}{t} \right| \d t, \qquad \text{and} \qquad \widehat{\m{I}}_4 :=\int_{|t| > \log n} \left| \frac{\chi_{\widehat{G}_\SB}(t)}{t} \right| \d t.
\end{align*}
Proceeding as in the proof of Theorem \ref{thm:EE-ustat}, it can be shown that $\widehat{\m{I}}_2 + \widehat{\m{I}}_3 + \widehat{\m{I}}_4 = o(r_n)$. The proof of \eqref{eq:EE-Student} can therefore be completed by showing that $\widehat{\m{I}}_1 = o(r_n)$. We shall do so by adapting the proof of Theorem \ref{thm:EE-ustat} to also account for the contribution of $\widehat{\Scale}_{\SB,\Lincom}^2$ to $\widetilde{F}_\SB$.

Defining
\begin{equation*}
    V_\SB := \frac{1}{2n\vartheta_{\SB,\Lincom}^2} n^{-1} \sum_{1 \leq i \leq n} \left( \ell_i^2 - \sigma_\ell^2 + 4 \E[\ell_j q_{ij} | Z_i] \right) + \frac{2}{n\vartheta_{\SB,\Lincom}^2} \binom{n}{2}^{-1} \sum_{1 \leq i < j \leq n} \E[q_{ik}q_{jk}|Z_i,Z_j]
\end{equation*}
and using \cite{Callaert-Veraverbeke_1981_AOS} and the H\"older inequality, it can be shown that
\begin{equation}\label{E:expansion_student}
    \chi_{\widetilde{F}_\SB}(t) = \E \left[ \exp \big( \iota t \widetilde{T}_\SB \big) \right] = \chi_{F}(t) - \iota t \E \left[ \exp \left( \iota t \frac{L}{\Scale_{\SB,\Lincom}} \right) V_\SB \frac{L}{\Scale_{\SB,\Lincom}} \right] + O(\m{T}_1(t)),
\end{equation}
where $\chi_{F}$ was (defined and) analyzed in the proof of Theorem \ref{thm:EE-ustat} and where
\begin{equation*}
    \m{T}_1(t) := \frac{|t|}{n h^{d/2+1} + n^{3/2} h^{3d/2+3}} + \frac{t^2}{n h^{d/2+1} + n^{3/2} h^{3d/2+3}}.
\end{equation*}

As in \citet{Nishiyama-Robinson_2001_ChBook}, the second term on the right-hand side of \eqref{E:expansion_student} admits an expansion of the form
\begin{align}\label{E:extra_term_expasion}
    &\E \left[ \exp \left( \iota t \frac{L}{\Scale_{\SB,\Lincom}} \right) V_\SB \frac{L}{\Scale_{\SB,\Lincom}} \right] \nonumber \\
    &= \chi_\ell(\tau)^{n-1} \frac{\varkappa_1 + 4\varkappa_2}{2 n^2 \Scale_{\SB,\Lincom}^3} - \chi_\ell(\tau)^{n-2} \frac{t^2}{2} \frac{\varkappa_1 + 4\varkappa_2}{n^2 \Scale_{\SB,\Lincom}^3} + \chi_\ell(\tau)^{n-3} O(\m{T}_2(t)), \qquad \tau := \frac{t}{n\Scale_{\SB,\Lincom}},
\end{align}
where $\chi_\ell(\tau)^{n-k}$ was (defined and) analyzed in the proof of Theorem \ref{thm:EE-ustat} and where
\begin{equation*}
    \m{T}_2(t) := \frac{|t|}{n + h^{d+2-P}} + \frac{t^2}{n + n^{3/2}h^{d+2}} + \frac{|t|^3}{n} + \frac{t^4}{n^{3/2} + n^{2}h^{d+2}} + \frac{|t|^5}{n^{5/2}h^{d+2}} + \frac{t^6}{n^{3}h^{d+2}}.
\end{equation*}
Combining \eqref{E:expansion_student} and \eqref{E:extra_term_expasion} with the previously obtained expansions for $\chi_{F}$ and $\chi_\ell(\tau)^{n-k}$, we obtain an expansion of the form
\begin{equation*}
    \chi_{\widetilde{F}_\SB}(t) = \chi_{\widehat{G}_\SB}(t) + \m{F}(t),
\end{equation*}
where
\begin{equation*}
    \widehat{\m{I}}_1 := \int_{|t| \leq \log n} \left| \frac{\m{F}(t)}{t} \right| \d t = o(r_n).
\end{equation*}
\qed

\section{Auxiliary Lemma}

\begin{lem}\label{L:Q_Lp}
    If $\m{S}\subseteq \{(i,j): 1 \leq i,j \leq n\}$ and if $p \geq 2$, then
    \begin{equation*}
        \E\left[\left|\sum_{\{i,j\} \in \m{S}} q_{ij}\right|^p\right]
        \ls \left(\mathfrak{s}_1\E[q_{12}^2]\right)^{p/2} + \mathfrak{s}_{p/2}\E\left[(\E [q_{12}^2|Z_1])^{p/2}\right] + \mathfrak{s}_1\E[|q_{12}|^p],
    \end{equation*}
    where
    \begin{equation*}
        \mathfrak{s}_s:= \max\left[\sum_{1 \leq i \leq n} \left(\sum_{1 \leq j \leq n} \1(\{i,j\}\in\m{S})\right)^s , \sum_{1 \leq i \leq n} \left( \sum_{1 \leq j \leq n} \1(\{j,i\}\in\m{S})\right)^s\right].
    \end{equation*}
\end{lem}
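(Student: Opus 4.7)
The plan is to control $S = \sum_{\{i,j\} \in \m{S}} q_{ij}$ via Burkholder's inequality applied to the Doob martingale $M_k := \E[S|\m{F}_k]$, with $\m{F}_k = \sigma(Z_1,\ldots,Z_k)$, exploiting throughout the canonical property $\E[q_{12}|Z_1]=0$ inherited from the Hoeffding decomposition. After symmetrizing $\m{S}$ into unordered pairs at the cost of a universal constant, $M_k$ equals the partial sum over $\{i,j\} \in \m{S}$ with $\max(i,j) \leq k$, and its increments take the clean form $D_k = \sum_{i < k:\,\{i,k\} \in \m{S}} q_{ik}$. For $p \geq 2$, the Burkholder--Davis--Gundy inequality then yields
\[
\E|S|^p \ls \E\left[\left(\sum_k \E[D_k^2|\m{F}_{k-1}]\right)^{p/2}\right] + \sum_k \E|D_k|^p,
\]
so the problem reduces to controlling the predictable quadratic variation and the sum of the $p$-th moments of the increments separately.

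For the sum-of-jumps piece, conditional on $Z_k$ the increment $D_k$ is a sum of $\ell_k := |\{i < k:\{i,k\} \in \m{S}\}|$ i.i.d. mean-zero variables (by canonicity in $Z_i$). A conditional application of Rosenthal's inequality gives $\E[|D_k|^p|Z_k] \ls \ell_k^{p/2} (\E[q_{12}^2|Z_2=Z_k])^{p/2} + \ell_k\, \E[|q_{12}|^p|Z_2=Z_k]$; integrating and summing, with the bookkeeping bounds $\sum_k \ell_k \leq \mathfrak{s}_1$ and $\sum_k \ell_k^{p/2} \leq \mathfrak{s}_{p/2}$ (immediate from the definition of $\mathfrak{s}_s$ as the maximum over the two orderings), yields
\[
\sum_k \E|D_k|^p \ls \mathfrak{s}_{p/2}\,\E[(\E[q_{12}^2|Z_1])^{p/2}] + \mathfrak{s}_1\,\E[|q_{12}|^p],
\]
which are precisely the second and third terms of the claimed bound.

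For the predictable quadratic variation, expanding the inner square and collecting across $k$ produces a diagonal plus an off-diagonal contribution:
\[
\sum_k \E[D_k^2|\m{F}_{k-1}] = \underbrace{\sum_i m_i\, r(Z_i)}_{V_1} + \underbrace{\sum_{i \neq i'} a_{ii'}\, R(Z_i,Z_{i'})}_{V_2},
\]
with $r(z) := \E[q_{12}^2|Z_1=z]$, $R(z,z') := \E_{Z_2}[q(z,Z_2)q(z',Z_2)]$, $m_i$ counting the neighbors of $i$ with index above $i$ (so $\sum_i m_i \leq \mathfrak{s}_1$), and $a_{ii'}$ counting common such neighbors of $i,i'$. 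Since $V_1 \geq 0$, I would split $\|V_1\|_{p/2} \leq \E[V_1] + \|V_1 - \E[V_1]\|_{p/2}$: the mean contributes $\E[V_1] \leq \mathfrak{s}_1 \E[q_{12}^2]$, supplying the leading ``Gaussian'' term $(\mathfrak{s}_1\E[q_{12}^2])^{p/2}$ upon raising to the $(p/2)$-th power, and the fluctuation is a weighted i.i.d. sum of non-negative variates handled by the von Bahr--Esseen inequality (for $p/2 \in [1,2]$) or Rosenthal (for $p/2 \geq 2$) to produce $\|V_1 - \E[V_1]\|_{p/2}^{p/2} \ls \mathfrak{s}_{p/2}\,\E[r(Z_1)^{p/2}]$, already absorbable into an existing term.

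The main obstacle is $V_2$, which is itself a weighted canonical quadratic form in $(Z_i)$ with kernel $R$ satisfying $\E_{Z_1}R(Z_1,z')=0$ (again by canonicity of $q$). The two useful handles are the pointwise bound $|R(z,z')| \leq r(z)^{1/2} r(z')^{1/2}$ from Cauchy--Schwarz on the inner expectation, and the combinatorial bound $a_{ii'} \leq (d_i d_{i'})^{1/2}$ from Cauchy--Schwarz on the common-neighbor count. Either a direct decoupling-plus-Rosenthal argument on $V_2$ (introducing an independent copy $(Z_i')$ to make $R(Z_i, Z_{i'}')$ into a sum of conditionally independent terms) or a second iteration of the Burkholder--Rosenthal scheme with the kernel $R$ in place of $q$ shows that $\|V_2\|_{p/2}^{p/2}$ is dominated by the two terms already extracted from $V_1$ and the jump sum. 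Combining yields the claim. The delicate point throughout is bookkeeping: verifying that the weights appearing at each stage respect the $\mathfrak{s}_1$ and $\mathfrak{s}_{p/2}$ budgets defined via the maximum over the two possible orderings, which is what makes the final bound insensitive to the particular symmetry (or asymmetry) of the incidence structure of $\m{S}$.
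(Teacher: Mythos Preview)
Your martingale route via Burkholder--Rosenthal is genuinely different from the paper's argument, which is essentially a two-line citation: the inequality is first established for the \emph{decoupled} array $\widetilde{q}_{ij}=q(Z_i^{(1)},Z_j^{(2)})$ (with $\{Z_i^{(k)}\}$ i.i.d.\ copies of $Z$) by invoking Proposition~2.4 of Gin\'e--Lata{\l}a--Zinn (2000), and then transferred back to the coupled sum via the decoupling inequalities of de~la~Pe\~na--Montgomery-Smith (1995). Your approach is more self-contained and makes the martingale structure explicit, which is instructive; but the step you correctly flag as the ``main obstacle''---controlling the off-diagonal predictable-variation piece $V_2$---is not actually resolved in your sketch. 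Asserting that ``either decoupling-plus-Rosenthal or a second iteration'' handles $V_2$ elides real work: the iteration does not obviously terminate with the correct $\mathfrak{s}_1,\mathfrak{s}_{p/2}$ budgets (the common-neighbor weights $a_{ii'}$ generate new combinatorial sums that must be matched back to the original degree counts, and your Cauchy--Schwarz bound $a_{ii'}\le (d_i d_{i'})^{1/2}$ is too crude for this), and if you fall back on decoupling for $V_2$ you are importing the Gin\'e--Lata{\l}a--Zinn machinery at a later stage anyway. The paper's approach buys brevity and sidesteps all of this bookkeeping; yours would buy a proof from first principles, but only once the $V_2$ step is written out in full.
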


\begin{proof}
    By \citet[Proposition 2.4]{Gine-Latala-Zinn_2000_ChBook}, the inequality holds for the \emph{decoupled} version of $q_{ij}$, defined as $\widetilde{q}_{ij}:=q(Z_i^{(1)},Z_j^{(2)})$ where $\{Z_i^{(k)}:1\leq i\leq  n$, $1\leq k\leq 2\}$ are $i.i.d.$ copies of $Z$. Finally, we can apply the decoupling inequalities in \citet{delaPena-MontgomerySmith_1995_AoP} to obtain the result at the expense of increasing the constant without altering the order of the upper bound; for further details, see \citet[Section 2.5]{Gine-Latala-Zinn_2000_ChBook}.
\end{proof}

%\begin{lem}\label{L:Q_moment}
%    If $1 \leq l \leq m < n$ and if $p \geq 2$, then
%    \begin{align*}
%        \E\left[\left|\sum_{i=l}^{m}\sum_{j=i+1}^{n}q_{ij}\right|^p\right] &\leq C_p \left[(n-l)(m-l)\right]^{p/2}\E[|q_{12}|^p], \qquad C_p := \left[8(p-1)\max(1,2^{p-3})\right]^{2p}.
%    \end{align*}
%\end{lem}
%
%\begin{proof}
%    Defining $\xi_{j}:=\sum_{i=l}^{\min(m,j)-1}q_{ij}$, we have $\sum_{i=l}^{m}\sum_{j=i+1}^{n}q_{ij} = \sum_{j=l+1}^{n}\xi_j$. Because $\{\xi_j,\m{F}_j\}$ is a martingale difference sequence when $\m{F}_j$ is the $\sigma$-algebra generated by $\{Z_1,\dots,Z_j\}$, we have
%    \begin{align*}
%        \E\left[\left|\sum_{i=l}^{m}\sum_{j=i+1}^{n}q_{ij}\right|^p\right]
%        &= \E\left[\left|\sum_{j=l+1}^{n}\xi_j\right|^p\right] \leq C_p^{1/2} (n-l)^{p/2-1}\sum_{j=l+1}^{n}\E[|\xi_j|^p] \\
%        &\leq C_p^{1/2} (n-l)^{p/2} \max_{l<j\leq n}\E[|\xi_j|^p] \\
%        &\leq C_p \left[(n-l)(m-l)\right]^{p/2}\E[|q_{12}|^p],
%    \end{align*}
%    where the first and third inequalities use \citet{Dharmadhikari-Fabian-Jogdeo_1968_AMS}.
%\end{proof}

\bibliography{CFJM_2024_JOE--bib}
\bibliographystyle{jasa}

\end{document}